%%%% Proceedings format for most of ACM conferences (with the exceptions listed below) and all ICPS volumes.
\documentclass[sigconf]{acmart}

\settopmatter{printacmref=false} % Removes citation information below abstract
\renewcommand\footnotetextcopyrightpermission[1]{} % removes footnote with conference information in first column
\pagestyle{plain} % removes running headers

%%%% As of March 2017, [siggraph] is no longer used. Please use sigconf (above) for SIGGRAPH conferences.

%%%% Proceedings format for SIGPLAN conferences
% \documentclass[sigplan, anonymous, review]{acmart}

%%%% Proceedings format for SIGCHI conferences
% \documentclass[sigchi, review]{acmart}

%%%% To use the SIGCHI extended abstract template, please ionkig
% https://www.overleaf.com/read/zzzfqvkmrfzn

\usepackage{booktabs} % For formal tables
\usepackage{amsmath}
\usepackage[boxed,ruled,vlined,linesnumbered]{algorithm2e}
%\SetKwRepeat{Do}{do}{while}

\usepackage{multicol}
\usepackage{subfig}

% Copyright
\setcopyright{none}
%\setcopyright{acmcopyright}
%\setcopyright{acmlicensed}
%\setcopyright{rightsretained}
%\setcopyright{usgov}
%\setcopyright{usgovmixed}
%\setcopyright{cagov}
%\setcopyright{cagovmixed}

% DOI
% \acmDOI{10.475/123_4}

% % ISBN
% \acmISBN{123-4567-24-567/08/06}

%Conference
\acmConference[]{}
\acmYear{2019}
\copyrightyear{2019}

% \acmArticle{4}
% \acmPrice{15.00}

% These commands are optional
%\acmBooktitle{Transactions of the ACM Woodstock conference}
% \editor{Jennifer B. Sartor}
% \editor{Theo D'Hondt}
% \editor{Wolfgang De Meuter}

\newcommand{\eg}{{\em e.g.}}
\newcommand{\ie}{{\em i.e.}}

\DeclareMathOperator*{\argmax}{arg\,max}

\begin{document}
\title[Synthesis of Safe Digital Controllers for Sampled-Data Stochastic  Systems]{Automated  Synthesis of Safe Digital Controllers for \\ Sampled-Data Stochastic Nonlinear Systems}
% \titlenote{Produces the permission block, and
%   copyright information}
% \subtitle{Extended Abstract}
% \subtitlenote{The full version of the author's guide is available as
%   \texttt{acmart.pdf} document}

\author{Fedor Shmarov}
\affiliation{%
  \institution{School of Computing, Newcastle University, UK}
%   \city{Newcastle upon Tyne}
%   \country{UK}
}
%\email{fedor.shmarov@ncl.ac.uk}

\author{Sadegh Soudjani}
\affiliation{%
  \institution{School of Computing, Newcastle University, UK}
%   \city{Newcastle upon Tyne}
%   \country{UK}
}
%\email{sadegh.soudjani@newcastle.ac.uk}

\author{Nicola Paoletti}
\affiliation{%
  \institution{Royal Holloway, University of London, UK}
}
%\email{nicola.paoletti@rhul.ac.uk}

\author{Ezio Bartocci}
\affiliation{%
  \institution{Faculty of Informatics, TU Wien, Austria}
%   \city{Vienna}
%   \country{Austria}
}
%\email{ezio.bartocci@tuwien.ac.at}

\author{Shan Lin}
\affiliation{%
  \institution{Department of Electrical and Computer Engineering, Stony Brook University, USA}
%   \state{NY}
%   \country{USA}
}
%\email{shan.x.lin@stonybrook.edu}

\author{Scott A. Smolka}
\affiliation{%
  \institution{Department of Computer Science, Stony Brook University, USA}
%   \state{NY}
%   \country{USA}
}
%\email{sas@cs.stonybrook.edu}

\author{Paolo Zuliani}
\affiliation{%
  \institution{School of Computing, Newcastle University}
  \city{Newcastle upon Tyne}
  \country{UK}
}
%\email{paolo.zuliani@ncl.ac.uk}

% The default list of authors is too long for headers.
\renewcommand{\shortauthors}{F. Shmarov et al.}

\begin{abstract}
We present a new method for the automated synthesis of digital controllers with formal safety guarantees for systems with nonlinear dynamics, noisy output measurements, and stochastic disturbances. Our method derives digital controllers such that the corresponding closed-loop system, modeled as a \textit{sampled-data stochastic control system}, satisfies a safety specification with probability above a given threshold. The proposed synthesis method alternates between two steps: generation of a candidate controller $\mathbf{p}_c$, and verification of the candidate. $\mathbf{p}_c$ is found by maximizing a Monte Carlo estimate of the safety probability, and by using a non-validated ODE solver for simulating the system. Such a candidate is therefore sub-optimal but can be generated very rapidly. To rule out unstable candidate controllers, we prove and utilize Lyapunov's indirect method for instability of sampled-data nonlinear systems. In the subsequent verification step, we use a validated solver based on SMT (Satisfiability Modulo Theories) to compute a numerically and statistically valid confidence interval for the safety probability of $\mathbf{p}_c$. If the  probability so obtained is not above the threshold, we expand the search space for candidates by increasing the controller degree.
We evaluate our technique on three case studies: an artificial pancreas model, a powertrain control model, and a quadruple-tank process.
\end{abstract}

\maketitle

% Ezio / Nicola
\section{Introduction}
\label{sec:intro}
%{\color[rgb]{1,0,0}
%Ezio: we need to introduce digital controllers. Something like:

%Digital controllers are nowadays the essential core
%in many embedded applications ranging from  flight  controllers
%to  medical  devices. }

%{\color[rgb]{1,0,0} Paolo: Ezio & all, what do you think?  }

%\Nicola{need to use word ``cyber-physical'' in abstract and intro (we are in the CPS category)}
%Paolo: added cyber-physical in a couple of places
Digital control \cite{Ogata95} is 
%nowadays 
essential in many cyber-physical and embedded systems applications, ranging from aircraft autopilots
to biomedical devices, due to its superior flexibility and scalability, and lower cost compared to its
analog counterpart. The synthesis of analog controllers for linear systems is well-studied \cite{Nise16}, but
its extension to nonlinear and stochastic systems has proven much more challenging. Furthermore, digital control adds
extra layers of complexity, \eg, time discretization and signal quantization. A common problem
in both digital and analog control is the lack of automated synthesis techniques with provable guarantees,
especially for properties beyond stability (\eg, safety) for nonlinear stochastic systems. 

In this paper we address
this problem by introducing a new method for the {\em synthesis of probabilistically safe digital controllers} for a large class of stochastic nonlinear systems, viz.\ {\em sampled-data stochastic control systems}. In such systems, the plant is
a set of nonlinear differential equations subject to random disturbances, and
the digital controller samples the noisy plant output, generating the control input with a fixed frequency.

Controllers are usually designed to achieve stability of the closed-loop system. 
\emph{Lyapunov's indirect method} provides conditions under which the stability of an equilibrium point of a
nonlinear system follows from the stability of that point for the linearized version of the system \cite{khalil2002}.
Lyapunov's method for sampled-data nonlinear systems is much more involved. Previous work~\cite{Teel04} provides
{\em sufficient} conditions on the sampled-data linearized system that ensure stability of the sampled-data nonlinear system.
Unfortunately, it is difficult to verify these conditions algorithmically. In this paper, we instead prove
{\em necessary} conditions for stability that are easy to verify, and use them to restrict the controller synthesis domain.
However, a stable system is not necessarily safe, as during the transient the system might reach an unsafe, catastrophic state. The synthesis approach that we propose overcomes this issue by deriving controllers that are safe.

%%%%
% Paolo: it's a bit of rambling, and we don't have much space ...
%%%%
%Current approaches to safe controller synthesis (\eg, \cite{AbateBCCDKKP17cav}) typically consider the worst-case behavior
%with respect to disturbances, which is not appropriate for certain applications.
%In our artificial pancreas case study, for example, disturbances (\ie, meals) typically follow some distribution capturing the eating habits of the patient. Extreme low and high meal values are very unlikely and, in a worst-case setting, would require overly conservative or aggressive control strategies that are unsuitable for the majority of cases.
% In our artificial pancreas case study, for example,
% it is extremely unlikely that a meal over, say, 300g of carbohydrates is consumed.
% The human body tolerates temporary
% hyperglycemia (too much sugar in the blood) so a ``strong'' controller might not be needed. In fact, a strong
% control input could cause hypoglycemia (too little sugar in the blood), which is definitely a dangerous condition and
% should be avoided.
% As such, a quantitative verification approach that weighs different scenarios under stochastic noise
% is more appropriate for applications like the artificial pancreas.
%As such, a quantitative verification approach that weighs different scenarios under stochastic disturbances is more appropriate for applications like the artificial pancreas where realistic disturbance models are available.
%%%%
Given an invariant $\phi$ (\ie, a correctness specification), and a nonlinear plant with stochastic disturbances and noisy outputs, our method synthesizes a digital controller such that the corresponding closed-loop system satisfies $\phi$ with probability above a given threshold $\vartheta$. The synthesis algorithm (Algorithm~\ref{alg:main} in Section~\ref{sec:method}) is illustrated in Figure~\ref{fig:approach}.  It works by alternating between two steps: generation of a candidate controller $\mathbf{p}_c$, and verification of the candidate. $\mathbf{p}_c$ is generated via the \textbf{optimize} procedure (see Algorithm~\ref{algo:optimize_func}), which maximizes a Monte Carlo estimate of the satisfaction probability by simulating a discrete-time approximation of the system with a non-validated ODE solver. Such a candidate is, therefore, sub-optimal but very rapid to generate.  To rule out unstable controller candidates, we prove and utilize Lyapunov's indirect method for instability of sampled-data nonlinear systems. Along with $\mathbf{p}_c$, \textbf{optimize} returns an approximate confidence interval (CI) $[a,b]$ for the satisfaction probability.

Next, in the verification step (procedure \textbf{verify}), we use a validated solver based on SMT (Satisfiability Modulo Theories) to compute a numerically and statistically valid CI $[a',b']$ for the satisfaction probability of $\mathbf{p}_c$. If the deviation between the approximate CI $[a,b]$ and the precise CI $[a',b']$ is too large, indicating that the candidates generated by \textbf{optimize} are not sufficiently accurate, we increase the precision of the non-validated, fast solver (procedure \textbf{update\_discretization}). If instead the precise probability is not above the threshold $\vartheta$, we expand the search space for candidates by increasing the controller degree. 

Summarizing, the novel contributions of this paper are:
\begin{itemize}
	\item we synthesize digital controllers for nonlinear systems subject to stochastic disturbances and measurement noise, while state-of-the-art
		approaches consider linear systems only;
	\item we prove Lyapunov's indirect method for instability of nonlinear systems in closed-loop
		with digital controllers;
	\item we present a novel algorithm that synthesizes digital controllers with guaranteed probabilistic safety
		properties.
    %%%
  %% Paolo: only list *novel* contributions
    %\item we evaluate the approach on two case studies: an artificial pancreas model and a quadcopter model.
\end{itemize}
%\Nicola{it'd be nice saying something about performance in the intro} Paolo: they are not spectacular - better leave it out

% Ezio: should we provide the reader with the structure of the paper ?
\begin{figure*}
\centering
\includegraphics[width=.95\textwidth, trim=0 .5cm 11cm 10cm, clip]{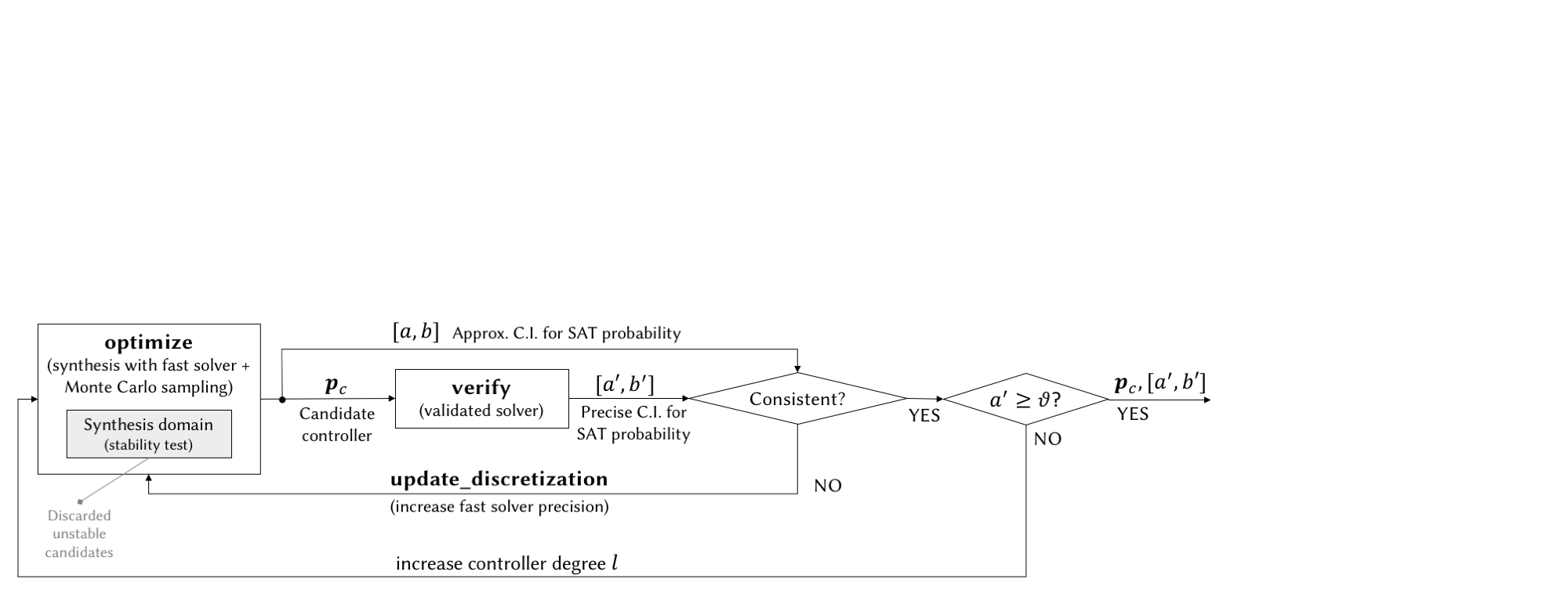}
\caption{Overview of the synthesis approach.}
\label{fig:approach}
\end{figure*}

% Nicola / Fedor / Paolo
\section{Sampled-data Stochastic Systems}
\label{sec:sdss}
We consider \textit{sampled-data stochastic control systems (SDSS)}, a rich class of control systems where the plant is specified as a nonlinear system subject to
% Nicola: "piecewise random" is not standard (e.g. piecewise random variables seems to be a different concept, possibly confusion)
%piecewise
random disturbances. The controller periodically samples the plant output subject to random noise generating, using the plant output's history, a control input that is kept constant during the sampling period with a zero-order hold; see Figure~\ref{fig:sdss}. The controller is characterized by a number of unknown parameters, which are the
target of our synthesis algorithm. %In the case of PID controllers, these parameters correspond to the proportional, integral and derivative gains.

\begin{definition}[Sampled-data Stochastic Control System]\label{defn:sdss}
An SDSS can be described in the following state-space notation:
\begin{align}
%\frac{d}{dt}x(t) = & \ f(x(t), u(t), \mathbf{d}(t)), \ x(0) = x_0\label{eq:state}\\
\frac{d}{dt}x(t) = & \ f(x(t), u(t), \mathbf{d}(t)), \ x(0) = x_0\nonumber\\
%y(t_k) = & \ o(x(t_k)) + \eta(t_k), \ t_k=k\cdot \tau, \ k \in \mathbb{Z}^{\geq 0} \label{eq:output} \nonumber \\
y(t_k) = & \ o(x(t_k)) + \eta(t_k), \ t_k=k\cdot \tau, \ k \in \mathbb{Z}^{\geq 0} \nonumber \\
	u(t) = & \ h(y(t_0),\ldots,y(t_k),u(t_0),\ldots,u(t_k),\mathbf{p}), \ \forall t\mathord{\in} [t_{k},t_{k+1})
\label{eq:input}
\end{align}
where $x(\cdot)\in \mathbb{R}^n$ is the state of the plant; $x_0$ is the initial state at time $t=0$; $\mathbf{d}(\cdot) \in \mathbb{R}^q$ is the disturbance; $y(\cdot)$ is the plant output, which is a function of the state with additive i.i.d. noise $\eta \sim \mathcal{N}(0,W)$ with covariance matrix $W$; $u(\cdot) \in \mathbb{R}^m$ is the control input, updated at every sampling period $\tau > 0$ by the digital controller $h$ (defined in Section~\ref{sec:digi_ctrl}); and $\mathbf{p} \in \mathbb{P} \subset \mathbb{R}^{2L+1}$ is the vector of unknown controller parameters, where $\mathbb{P}$ is a hyperbox (\ie, a product of closed intervals).
The dynamics of the plant is governed by the vector field $f$, which is assumed to be in $C^1$, hence Lipschitz-continuous. We also assume that the output map $o(\cdot)$ is in $C^1$.
\end{definition}

We assume that there are no time lags for transmitting the plant output to the controller and the control input to the plant.
The disturbance $\mathbf{d}(\cdot)$ is a piecewise-continuous function having, for a time horizon $T$, a finite number of discontinuities. The discontinuity points and the value of $\mathbf{d}(\cdot)$ at each sub-domain can be defined in terms of a finite number of random parameters drawn from arbitrary distributions. Note that these assumptions on $\mathbf{d}(\cdot)$ are reasonably mild and allow us to define very general classes of systems, which subsume, for instance, numerical solutions of stochastic differential equations~\cite{ruemelin1982numerical}.\footnote{Such numerical solutions rely on computing the value of the Wiener process at discrete time points, which makes it a special case of our disturbances.}

\begin{figure}
\centering
\includegraphics[width=\columnwidth, trim=1cm 13.5cm 6cm 0, clip]{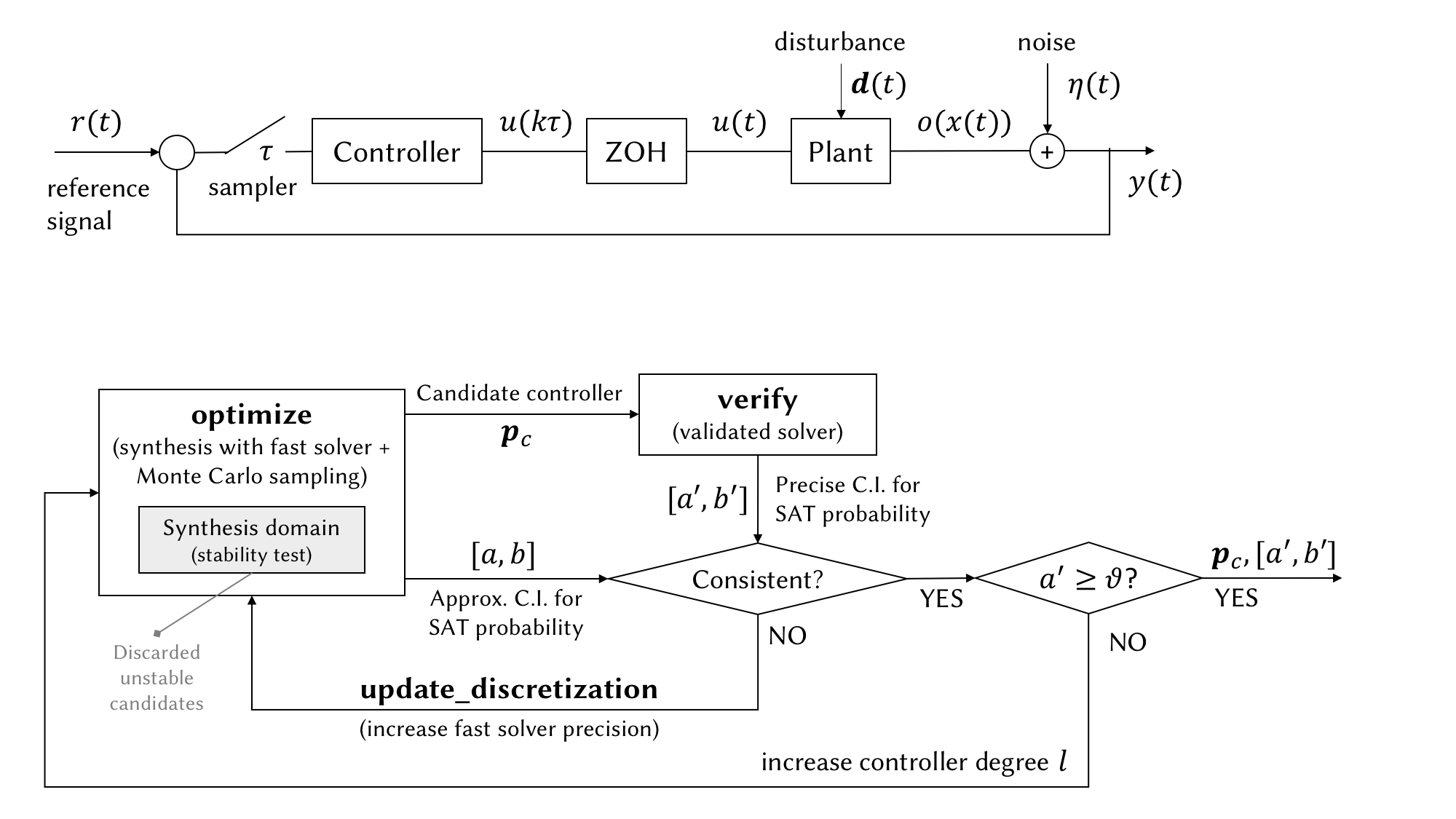}
\caption{Diagram of a sampled-data stochastic control system.}
\label{fig:sdss}
\end{figure}

\section{Digital Controllers}\label{sec:digi_ctrl}

% Paolo: I dont't think we need this for ICCAD
%Digital controllers have been widely used in industry because of their flexibility, scalability, and price in compare with their analog counterparts. Digital controllers are inexpensive, easy to configure, and scalable with respect to the required memory and computation.
%A digital controller receives the system output after it has been converted from an analog signal to a digital one, indicated by $\{y(t_k),k\in\mathbb Z^{\ge 0}\}$ in \eqref{eq:output}. It then computes the control input, which is subsequently converted from a digital signal to analog and applied to the system. Figure~\ref{fig:diagram} shows the system in closed-loop with the controller for tracking a reference signal $r(\cdot)$.

The operation of a digital controller is succinctly indicated in Equation~\eqref{eq:input}. These computations are generally performed using current and past output samples and past input samples.

\begin{definition}[Digital Controller for SDSS]\label{defn:digital_controller}
Given an SDSS, we denote $y(k) = y(t_k)$ and $u(k) = u(t_k)$ and define the tracking error as
\begin{equation}
\label{eq:feedback}
e(k) = r(k)-y(k),\quad k\in\mathbb Z^{\ge 0}
\end{equation}
where $r(\cdot)$ is the reference signal. The output of the controller is
\begin{equation}
u(k) = -\sum_{i=1}^{L}a_i u(k-i) + \sum_{i=0}^{L} b_i e(k-i),
\label{eq:digital_time}
\end{equation}
where $u(j)=e(j)=0$ for $j<0$\footnote{Note that if the controller has been previously deployed, i.e., it starts from a non-empty history, then $u(j),e(j)$ may be nonzero for $j<0$.}
and $L$ is the controller degree.
\end{definition}
Controller design amounts to finding a degree $L$ and coefficients $\{a_i\}_{i=1}^L, \{b_i\}_{i=0}^L$ that ensure the desired behavior of the closed-loop system. The vector of parameters $\mathbf{p}$ defined in (\ref{eq:input}) is recovered by setting $\mathbf{p} = [b_0,a_1, b_1,\ldots,a_L,b_L]^T$.
An alternative description of the controller is via the state-space representation
\begin{align}
x^c(k+1) & = G_c x^c(k)+ H_c e(k),\quad x^c(0) = x^c_0\nonumber\\
u(k) & = C_c x^c(k) + D_c e(k),\quad k\in\mathbb Z^{\ge 0}, \label{eq:digital_state}
\end{align}
where $x^c(k)$ is the state of the controller and matrices $(G_c,H_c,C_c,D_c)$ need to be designed. The above two representations are equivalent. Given a controller in the form of \eqref{eq:digital_time}, one can transform it to the representation \eqref{eq:digital_state}, for instance by taking states as memories that store previous values of inputs/outputs. Given matrices $(G_c,H_c,C_c,D_c)$, one can easily compute coefficients $\{a_i,b_i\}$ in \eqref{eq:digital_time} using matrix multiplications \cite{Ogata95}.

%\Paolo{Paolo: I think this contradicts the definition: perhaps you mean $u(j),e(j)$ may be nonzero for $0\leq j<L$? If not, we never talk about time periods, so I would just delete the sentence above.}

% Paolo: moved it to Case Studies section
%\subsection{Digital PID controllers}

% Paolo: the below is not needed for ICCAD
%where the \emph{trapezoidal summation} is used to approximate the integral and \emph{two-point difference} is used for the derivative term. This behavior in time domain will be the following difference equation
%Figure~\ref{fig:diagram} shows the actual closed-loop diagram and its discrete version.
%\begin{figure}
% \includegraphics[scale=0.16]{closed_loop}
%\includegraphics[width=\columnwidth, trim=0 13.5cm 7cm 0, clip=true]{}
% \caption{Closed-loop control: bottom diagram shows the actual continuous plant in connection with the digital controller, top diagram shows the equivalent connection from the view of controller.}
%\caption{Closed-loop control in a SDSS. ZOH: zero-order hold.}
%\label{fig:diagram}
%\end{figure}

\subsection{Stability of the closed-loop system}
A necessary requirement of any controller is stability of the closed-loop system. In our setting, we have the influence of both external inputs $(\mathbf{d},\eta,r)$ and initial states $(x_0,x_0^c)$. A suitable notion is \emph{input-to-state stability (ISS)} \cite{sontag2008input}, which implies that bounded input signals must result in bounded outputs and, at the same time, that  the effect of initial states must disappear as time goes to infinity. A necessary requirement of ISS is Lyapunov stability of the system `without' external inputs, stated in the next definition, a requirement that can be applied to both continuous- and discrete-time systems.

%is The next definition describes Lyapunov stability, which
\begin{definition}
Consider a dynamical system with state space $\mathcal D\subset \mathbb R^n$ and without any external inputs,
%\Nicola{(should it be $x_e\in\mathcal D\subset \mathbb R^n$?)},
where $x_e\mathord{\in} \mathcal D$ is an equilibrium point and $\mathcal D$ is open. Then $x_e$ is called
\textbf{Lyapunov stable} if for every $\epsilon>0$ there exists a $\delta>0$ such that for all $x(0)\in\mathcal D$ with $\|x(0)-x_e\|\le \delta$, we have $\|x(t)-x_0\|\le \epsilon$ for all $t\ge 0$.
\end{definition}

%Stability of nonlinear systems is studied in first instance via the linearized version of the nonlinear system, using \emph{Lyapunov's indirect method} \cite{khalil2002}. The method covers both continuous- and discrete-time systems, and answers two questions: 1) What properties should the linear system have to ensure stability of the nonlinear system?  2) What properties should the linear system have to ensure instability of the nonlinear system?  The answer to the first question is if all eigenvalues $\lambda_i$ of the state matrix of the linear system are in the left half-plane $Re(\lambda_i)<0$ (respectively, inside the unit circle for discrete-time systems $|\lambda_i|<1$).  The answer to the second question is if at least one eigenvalue of the state matrix of the linear system is in the right half-plane $Re(\lambda_i)>0$ (respectively, outside the unit circle for discrete-time systems $|\lambda_i|>1$).  Of course there are situations where no conclusion can be drawn in terms of stability, for instance if the eigenvalues of the linear system satisfy $Re(\lambda_i)\le 0$ but some of them are on the imaginary axis.

It is very easy to verify stability for discrete-time linear systems.
\begin{proposition}[Stability of Linear Systems \cite{khalil2002}]\label{prop:linear_stab}
A linear discrete-time system $x(k+1) = G x(k)$ is Lyapunov stable at $x_e = 0$ and $\lim_{k\rightarrow\infty} x(k) = 0$ if and only if all eigenvalues of $G$ are inside unit circle. This condition is equivalent to the existence of positive definite matrices $M,Q$ such that $G^T M G - M = -Q$.
\end{proposition}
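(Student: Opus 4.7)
The plan is to establish the three-way equivalence among: (a) all eigenvalues of $G$ lie strictly inside the unit circle (equivalently, the spectral radius $\rho(G) < 1$); (b) $x_e = 0$ is Lyapunov stable and $x(k) \to 0$ for every initial state; (c) there exist positive definite $M, Q$ with $G^T M G - M = -Q$. I would prove (a)$\iff$(b) first, using the closed-form solution and the Jordan form of $G$, and then (a)$\iff$(c), via the construction of a Lyapunov function.

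For (a)$\Rightarrow$(b), I would write $G = P J P^{-1}$ in Jordan form so that $x(k) = G^k x(0) = P J^k P^{-1} x(0)$. A Jordan block of size $m$ with eigenvalue $\lambda$ has entries that grow at most like $\binom{k}{j} \lambda^{k-j}$, so when $|\lambda| < 1$ every such entry tends to $0$ and the sequence $\|G^k\|$ is geometrically bounded. This gives both $\lim_{k\to\infty} x(k) = 0$ and Lyapunov stability via the bound $\|x(k)\| \le \|G^k\|\cdot \|x(0)\|$. For (b)$\Rightarrow$(a), I would contrapose: if some eigenvalue $\lambda$ satisfies $|\lambda| \ge 1$, pick an eigenvector $v$; then $x(k) = \lambda^k v$ starting from $x(0)=v$ either stays bounded away from $0$ (if $|\lambda|=1$, ruling out $x(k)\to 0$) or diverges (if $|\lambda|>1$, violating Lyapunov stability when scaled by any small $\delta$). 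The subtlety here, which I flag as the main care-point of the proof, is the boundary case $|\lambda|=1$: a simple eigenvalue on the unit circle yields Lyapunov stability but not convergence, while a nontrivial Jordan block on the unit circle destroys even Lyapunov stability because the $k^{j}$ factors grow polynomially; the proposition as stated rules out both situations simultaneously by requiring both properties.

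For (a)$\Rightarrow$(c), I would fix any positive definite $Q$ and define
\begin{equation*}
M = \sum_{k=0}^{\infty} (G^T)^k\, Q\, G^k.
\end{equation*}
Geometric decay of $\|G^k\|$ (from the Jordan analysis above) makes the series absolutely convergent. The telescoping identity
\begin{equation*}
G^T M G = \sum_{k=0}^{\infty} (G^T)^{k+1} Q\, G^{k+1} = M - Q
\end{equation*}
gives the Lyapunov equation, and positive definiteness of $M$ follows because every term is positive semidefinite and the $k=0$ term equals the positive definite $Q$. For (c)$\Rightarrow$(a), I would use the quadratic candidate $V(x) = x^T M x$ and compute
\begin{equation*}
V(x(k+1)) - V(x(k)) = x(k)^T (G^T M G - M) x(k) = -x(k)^T Q x(k) \le -\lambda_{\min}(Q)\,\|x(k)\|^2,
\end{equation*}
which combined with $V(x) \ge \lambda_{\min}(M)\|x\|^2$ yields exponential decay of $V(x(k))$ and hence of $\|x(k)\|$; this forces every eigenvalue of $G$ to satisfy $|\lambda|<1$, since otherwise an eigenvector initial condition would contradict decay.

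Overall, the argument is essentially a packaging of classical linear-algebra facts; the only genuine care is in (i) handling the Jordan structure to cover the strict-inequality case $|\lambda|<1$ uniformly, and (ii) in the converse direction, making sure both negations of (b) (failure of boundedness versus failure of convergence) are ruled out by the spectral condition.
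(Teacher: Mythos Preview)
The paper does not supply its own proof of this proposition: it is quoted as a standard result with a citation to Khalil's textbook, and is used later (in the proof of Theorem~\ref{thm:instability}) only as a black box to obtain the Lyapunov-equation matrices $M_1,M_2,Q_1,Q_2$. So there is nothing in the paper to compare your argument against.

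Your proof is correct and follows the classical route one finds in Khalil or any standard linear-systems text: Jordan form for (a)$\Leftrightarrow$(b), the infinite-sum construction $M=\sum_{k\ge 0}(G^T)^kQG^k$ for (a)$\Rightarrow$(c), and the quadratic Lyapunov function $V(x)=x^TMx$ for (c)$\Rightarrow$(a). One small technicality worth tightening in (b)$\Rightarrow$(a): if $G$ is real and $\lambda$ is a non-real eigenvalue with $|\lambda|\ge 1$, the eigenvector $v$ is complex and $x(0)=v$ is not a trajectory of the real system; you should either work over $\mathbb C$ throughout or take the real part of $\lambda^k v$ (equivalently, an initial condition in the real $2$-dimensional invariant subspace associated with the pair $\lambda,\bar\lambda$) to exhibit a real trajectory that fails to converge. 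With that adjustment the argument is complete.
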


In the remainder of this section, we consider a version of SDSS in Definition~\ref{defn:sdss} controlled by \eqref{eq:digital_state} without any external input, \ie, when $(\mathbf d,\eta,r)$ are identically zero. We study Lyapunov stability of the closed-loop system without external inputs, which is necessary for having input-to-state stability.
Let us put $(\mathbf d,\eta,r)\equiv 0$ and define $x_1 := x-x_e$, $u_1 := u-u_e$, with $(x_e,u_e)$ being the equilibrium point of SDSS \eqref{eq:input}, \ie, $f(x_e,u_e,0) = 0$. Similarly, define $y_1 := y-y_e$ with $y_e = o(x_e)$ and $x_1^c  := x_c - x_e^c$  with $x_e^c$ being the equilibrium point for the controller.
%
%We denote the SDSS without disturbances using variables $(x_1,u_1)$ for the nonlinear system and $(x_2,u_2)$ for the linearized system.
% with the 1 (2) subscript to denote the nonlinear (linearized) closed-loop system.
%
%
%\textcolor{red}{\bf Sadegh: it's not clear to me what the nonlinear system without disturbances is - can you explain please?}
%
We then denote the plant dynamics after eliminating external inputs based on shifted version of variables by
\begin{align}
\nonumber
\frac{d}{dt}x_1(t) & = \bar f(x_1(t),u_1(t)),\\
y_1(t_k) & = \bar o(x_1(t_k)),\quad t_k = k\tau,\,\, k\in\mathbb Z^{\ge 0}
\label{eq:nonl_additive}
\end{align}
where $\bar f(x_1,u_1) = f(x_1+x_e,u_1+u_e,0)$ and $\bar o(x_1) = o(x_1+x_e)-y_e.$
Thus $x_1 = 0, u_1 = 0$ is an equilibrium point for \eqref{eq:nonl_additive}.
%
%\textcolor{red}{\bf Sadegh: I've changed $f$ to be $C^1$ in the definition of SDSS to match the requirement above. Is that OK?} \Sadegh{Yes.}
%
%\section{Stability Requirement}
%
%Without loss of generality, we assume that the input to the controller is the state of the nonlinear system
%$\{x_1(k\tau),\,k\in\mathbb N\}$ with
The digital controller dynamics is likewise given by
%and output $\{u_1(k\tau),\,k\in\mathbb N\}$,
\begin{align}
x^c_1(k+1) & = G_c x^c_1(k) - H_c y_1(k\tau)\nonumber\\
u_1(k\tau) & = C_c x^c_1(k) - D_c y_1(k\tau),\label{eq:digital}
\end{align}
where the minus sign is due to $r(\cdot)=0$ and negative feedback in \eqref{eq:feedback}.
The nonlinear system \eqref{eq:nonl_additive} is controlled with the digital controller \eqref{eq:digital} by setting $u_1(t) = u_1(k\tau)$ for all $t\in[k\tau,(k+1)\tau)$, $k\in\mathbb Z^{\ge 0}$.

Ensuring stability of the sampled-data nonlinear control system \eqref{eq:nonl_additive}-\eqref{eq:digital} is difficult in general. Sufficient conditions for preserving stability under linearization are provided in \cite{Teel99,Teel04}, but they are hard to verify automatically. Rather, we provide an easy-to-check necessary condition for Lyapunov stability to reject unsuitable controllers. This necessary condition is based on Lyapunov's indirect method developed here for sampled-data nonlinear systems.
In the following we prove that if the linearized closed-loop system has an eigenvalue outside the unit circle, the nonlinear closed-loop system \eqref{eq:nonl_additive}-\eqref{eq:digital} is not Lyapunov stable, thus the system \eqref{eq:input}-\eqref{eq:digital_state} is not input-to-state stable.

We first consider the linearized version of the closed-loop system \eqref{eq:nonl_additive}-\eqref{eq:digital},
%\eqref{eq:nonl_additive},
%and then discretize it from the point of view of the controller, and can be written as
which is
\begin{align*}
\frac{d}{dt}x_2(t) & = A x_2(t) + B u_2(k\tau),\quad\forall t\in[k\tau,(k+1)\tau)\nonumber\\
x^c_2(k+1) & = G_c x^c_2(k)- H_c C x_2(k\tau)\nonumber\\
u_2(k\tau) & = C_c x^c_2(k) - D_c C x_2(k\tau),%\label{eq:linear}
\end{align*}
where
%$A := \left.\frac{\partial f}{\partial x}(x,u)\right\vert_{x=0,u=0}$ and $B := \left.\frac{\partial f}{\partial u}(x,u)\right\vert_{x=0,u=0}$.
$A := \frac{\partial \bar f}{\partial x_1}(0,0)$,
$B := \frac{\partial \bar f}{\partial u_1}(0,0)$,
and
$C := \frac{\partial \bar o}{\partial x_1}(0)$,
Define $g(x_1,u_1) := \bar f(x_1,u_1) - A x_1 - B u_1$ and $l(x_1) := \bar o(x_1)-C x_1$ \ie, the non-linear terms describing the deviation between non-linear and linearized functions. Thus, we have
\begin{equation*}
\lim\limits_{\|(x_1,u_1)\|\rightarrow 0} \frac{\|g(x_1,u_1)\|}{\|(x_1,u_1)\|}= 0
\quad\text{and}\quad
\lim\limits_{\|x\|\rightarrow 0} \frac{\|l(x_1)\|}{\|x_1\|}= 0.
\end{equation*}
Then, for any $\gamma>0$ there exists an $r(\gamma)>0$ such that
\begin{equation}
\label{eq:gamma}
\|g(x_1,u_1)\|\le \gamma \|x_1\| + \gamma \|u_1\|
\quad\text{and}\quad
\|l(x_1)\|\le \gamma \|x_1\|,
\end{equation}
for all $x_1\in\mathbb R^n, u_1\in\mathbb R^m$ with
$\|(x_1,u_1)\|\le r(\gamma)$.
%The digital controller with sampling time $\tau$ sees a discrete linear time-invariant (LTI) system of the form
%\begin{align}
%x_1(k+1) & = G x_1(k)+ H u_1(k)\nonumber\\
%y(k) & = C x_1(k) + D u_1(k),\label{eq:dis_LTI}
%\end{align}
%with matrices $(G,H,C,D)$ computed according to the dynamics of the continuous-time nonlinear system:
%\begin{equation}
%\label{eq:matrices}
%G(\tau) := e^{A\tau},\quad H(\tau) := \left(\int_0^\tau e^{A\lambda} d\lambda\right)B,\quad D = 0,
%\end{equation}
%with
%\begin{equation*}
%A:=\left.\frac{\partial f}{\partial x_1}(x_1,u_1)\right\vert_{x_1=0,u_1=0},\, B:=\left.\frac{\partial f}{\partial u_1}(x_1,u_1)\right\vert_{x_1,u_1=0},\, C:=\left.\frac{\partial o}{\partial x_1}(x_1) \right\vert_{x_1=0}.
%\end{equation*}
We now simplify the dynamics of the closed-loop nonlinear system as
\begin{equation}
\label{eq:simplified}
\begin{cases}
\frac{d}{dt}x_1(t) = A x_1(t) + B C_c x^c_1(k) - B D_c C x_1(k\tau) + g(x_1(t),u_1(k\tau))\\
x^c_1(k+1) = G_c x^c_1(k) - H_c C x_1(k\tau) - H_c l(x_1(k\tau)),
%\quad\forall t\in [k\tau,k\tau+\tau).
\end{cases}
%\\
%& \begin{cases}
%\frac{d}{dt}x_2(t) = A x_2(t) + B C_c x^c_2(k) + B D_c x_2(k\tau)\\
%x^c_2(k+1) = G_c x^c_2(k)+ H_c x_2(k\tau)
%\end{cases}\\
%& \Rightarrow
%\begin{cases}
%\frac{d}{dt}\Delta x(t) = A \Delta x(t) + B D_c \Delta x(k\tau) + B C_c \Delta x_c(k) + g(x_1(t))\\
%\Delta x_c(k+1) = G_c\Delta x_c(k) + H_c\Delta x(k\tau)\\
%\end{cases}
\end{equation}

The next lemma establishes a bound on $x_1(t)$ for any  $t\in [k\tau,k\tau+\tau]$, as a function of $x_1(k\tau)$ and $x^c_1(k)$. Due to space constraints we present the
proof of this lemma in the appendix.
%(they will be made available online in a technical report). % As suggested by the ICCAD PC chair
%
\begin{lemma}
\label{lem1}
Under dynamics \eqref{eq:simplified}, for a given $t\in[k\tau,(k+1)\tau]$ and any $\gamma>0$, we have
%\Nicola{(shouldn't the interval be right-open?)}
%\Sadegh{ No, it is the closed interval, it is just an inequality coming from ODE, nothing to do with the sampling interval.}
\begin{equation}
\label{eq:bound_pert}
\|x_1(t)\|\le h_1(t-k\tau,\gamma)\|x_1(k\tau)\| + h_2(t-k\tau,\gamma)\|x^c_1(k)\|
\end{equation}
if $\|(x_1(t_1),u_1(k\tau))\|\le r(\gamma)$ for all $t_1\in[k\tau,t]$, with $r(\gamma)$ satisfying property \eqref{eq:gamma}. Functions $h_1,h_2$ are continuous and nonnegative with $h_1(0,\gamma) = 1$ and $h_2(0,\gamma) = 0$.
\end{lemma}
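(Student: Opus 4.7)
The plan is to fix $k$, regard $x_1^c(k)$, $x_1(k\tau)$, and $u_1(k\tau)$ as constants on the interval $[k\tau,(k+1)\tau]$, and treat the ODE in \eqref{eq:simplified} as a perturbed linear ODE in the single unknown $x_1(\cdot)$. Concretely, on this interval the dynamics reduces to
\begin{equation*}
\tfrac{d}{dt} x_1(t) = A\, x_1(t) + v_k + g(x_1(t), u_1(k\tau)),
\end{equation*}
where $v_k := B C_c x_1^c(k) - B D_c C\, x_1(k\tau)$ is constant. Applying variation of constants yields
\begin{equation*}
x_1(t) = e^{A(t-k\tau)} x_1(k\tau) + \int_{k\tau}^{t} e^{A(t-s)} \bigl(v_k + g(x_1(s), u_1(k\tau))\bigr)\, ds.
\end{equation*}

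Next I would take norms, use the standard bound $\|e^{A\sigma}\| \le e^{\|A\|\sigma}$, and apply the inequality \eqref{eq:gamma} together with a bound on $\|u_1(k\tau)\|$ obtained directly from the controller equation in \eqref{eq:simplified}: namely $\|u_1(k\tau)\| \le \|C_c\|\|x_1^c(k)\| + (\|D_c C\| + \gamma \|D_c\|)\|x_1(k\tau)\|$, using $\|l(x_1(k\tau))\| \le \gamma \|x_1(k\tau)\|$. Because $x_1^c(k)$, $x_1(k\tau)$, $u_1(k\tau)$ are constants on the interval, the only term inside the integral that still depends on the running variable $s$ is $\gamma\|x_1(s)\|$. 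Grouping everything else produces an inequality of the form
\begin{equation*}
\|x_1(t)\| \le \alpha(t-k\tau,\gamma)\,\|x_1(k\tau)\| + \beta(t-k\tau,\gamma)\,\|x_1^c(k)\| + \gamma \int_{k\tau}^{t} e^{\|A\|(t-s)} \|x_1(s)\|\, ds,
\end{equation*}
where $\alpha$ and $\beta$ are explicit combinations of the matrix norms $\|A\|, \|B\|, \|C\|, \|C_c\|, \|D_c\|$ and the exponential $e^{\|A\|(t-k\tau)}$, with $\alpha(0,\gamma)=1$ and $\beta(0,\gamma)=0$.

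Then I would apply Gronwall's inequality (in its integral form) to absorb the $\|x_1(s)\|$ term on the right-hand side, obtaining
\begin{equation*}
\|x_1(t)\| \le \bigl[\alpha(t-k\tau,\gamma)\|x_1(k\tau)\| + \beta(t-k\tau,\gamma)\|x_1^c(k)\|\bigr]\cdot \mu(t-k\tau,\gamma),
\end{equation*}
for an explicit continuous, nonnegative multiplier $\mu$ with $\mu(0,\gamma)=1$. Setting $h_1(\sigma,\gamma) := \alpha(\sigma,\gamma)\mu(\sigma,\gamma)$ and $h_2(\sigma,\gamma) := \beta(\sigma,\gamma)\mu(\sigma,\gamma)$ yields the desired bound, with $h_1(0,\gamma)=1$, $h_2(0,\gamma)=0$, and joint continuity in $(\sigma,\gamma)$ inherited from the elementary functions involved.

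The main obstacle is conceptually mild but technically delicate: the use of \eqref{eq:gamma} requires $\|(x_1(s),u_1(k\tau))\| \le r(\gamma)$ throughout $[k\tau,t]$, which is exactly the hypothesis of the lemma and must be used consistently when estimating both $g(x_1(s),u_1(k\tau))$ and $l(x_1(k\tau))$. Beyond that, the remaining care is bookkeeping — cleanly separating the contributions of $\|x_1(k\tau)\|$ and $\|x_1^c(k)\|$ so that the resulting coefficients $h_1, h_2$ depend only on $t-k\tau$ and $\gamma$, and verifying the boundary values at $t=k\tau$.
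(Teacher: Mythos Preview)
Your argument is correct. It differs from the paper only in packaging: the paper differentiates $W(t)=\|x_1(t)\|^2$, passes to $\sigma(t)=\|x_1(t)\|$, and obtains the scalar linear differential inequality
\[
\dot\sigma(t)\le L\,\sigma(t)+L_1\|x_1^c(k)\|+L_2\,\sigma(k\tau),\qquad L=\|A\|+\gamma,
\]
which it solves by the integrating factor $e^{-Lt}$ to get closed-form $h_1,h_2$. You instead write the Duhamel integral, bound termwise using $\|e^{A\sigma}\|\le e^{\|A\|\sigma}$, and close with Gronwall. These are two standard, equivalent renderings of the same estimate: your Gronwall step plays exactly the role of the paper's integrating factor, and both produce growth rate $\|A\|+\gamma$. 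The paper's route yields slightly cleaner explicit formulas for $h_1,h_2$, which it then feeds directly into the proof of Lemma~\ref{lem2}; your route makes the perturbation structure (linear part plus $O(\gamma)$ remainder) a bit more visible. Either is fine.
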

The upper bound \eqref{eq:bound_pert} enables us to study the effect of the nonlinear terms $g(\cdot)$ and $l(\cdot)$ in the sampled version of the dynamics, which can be written as
\begin{align*}
x_1((k+1)\tau) & = (G-H D_c C) x_1(k\tau) - H C_c x^c_1(k) + \hat g(k\tau),\nonumber\\
x^c_1(k+1) & = G_c x^c_1(k) - H_c C x_1(k\tau) + \hat l(k\tau),%\label{eq:linear}
\end{align*}
with $G = e^{A\tau}$, $H = \int_0^\tau e^{A\lambda}Bd\lambda$, $\hat l(k\tau) = - H_c l(x_1(k\tau))$, and
\begin{equation}
\label{eq:hat_g}
\hat g(k\tau) = \int_{0}^{\tau}e^{A(\tau-\lambda)}g(x_1(k\tau+\lambda),u_1(k\tau))d\lambda.
\end{equation}
Next, we derive a bound for $\hat{g}(\cdot)$ in terms of $x_1$ and $x_1^c$.
\begin{lemma}
\label{lem2}
For any $\gamma>0$, there exist continuous functions $\hat h_1,\hat h_2$ such that the following inequality holds for $\hat g(\cdot)$ defined in \eqref{eq:hat_g},
\begin{equation*}
%\label{eq:hat_g_bound}
\|\hat g(k\tau)\|\le \gamma \hat h_1(\tau)\|x_1(k\tau)\| + \gamma \hat h_2(\tau)\|x^c_1(k)\|,
\end{equation*}
if $\|(x_1(t),u_1(k\tau))\|\le r(\gamma)$ for all $t\in[k\tau,(k+1)\tau]$, with $r(\gamma)$ satisfying property \eqref{eq:gamma}.
Functions $\hat h_1,\hat h_2$ are nonnegative with $\hat h_1(0) = 0$ and $\hat h_2(0) = 0$.
\end{lemma}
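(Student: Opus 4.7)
My plan is to take norms inside the integral defining $\hat g(k\tau)$ and then bound the integrand using (i) property \eqref{eq:gamma} on $g$, (ii) Lemma~\ref{lem1} to bound $\|x_1(k\tau+\lambda)\|$ uniformly for $\lambda\in[0,\tau]$, and (iii) the output equation of the digital controller to bound $\|u_1(k\tau)\|$ in terms of $\|x_1(k\tau)\|$ and $\|x_1^c(k)\|$. The hypothesis $\|(x_1(t),u_1(k\tau))\|\le r(\gamma)$ on $[k\tau,(k+1)\tau]$ is precisely what allows us to invoke both \eqref{eq:gamma} and Lemma~\ref{lem1} on the whole sub-interval.

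More concretely, I would first write
\[
\|\hat g(k\tau)\|\le \int_0^\tau \|e^{A(\tau-\lambda)}\|\,\bigl(\gamma\|x_1(k\tau+\lambda)\| + \gamma\|u_1(k\tau)\|\bigr)\,d\lambda,
\]
using submultiplicativity of the matrix norm and \eqref{eq:gamma}. Then I substitute the bound from Lemma~\ref{lem1} for $\|x_1(k\tau+\lambda)\|$, which splits the first term into one integral weighted by $h_1(\lambda,\gamma)$ and another weighted by $h_2(\lambda,\gamma)$. For the second term, since $u_1(k\tau) = C_c x_1^c(k) - D_c C x_1(k\tau) - D_c l(x_1(k\tau))$ from \eqref{eq:digital} combined with $\bar o(x_1) = Cx_1 + l(x_1)$, I apply $\|l(x_1(k\tau))\|\le \gamma\|x_1(k\tau)\|$ from \eqref{eq:gamma} and the triangle inequality to obtain $\|u_1(k\tau)\|\le (\|D_c C\|+\gamma\|D_c\|)\|x_1(k\tau)\| + \|C_c\|\|x_1^c(k)\|$.

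Collecting the coefficients of $\|x_1(k\tau)\|$ and $\|x_1^c(k)\|$ yields the desired form with
\begin{align*}
\hat h_1(\tau) &= \int_0^\tau \|e^{A(\tau-\lambda)}\|\,\bigl(h_1(\lambda,\gamma) + \|D_c C\| + \gamma\|D_c\|\bigr)\,d\lambda,\\
\hat h_2(\tau) &= \int_0^\tau \|e^{A(\tau-\lambda)}\|\,\bigl(h_2(\lambda,\gamma) + \|C_c\|\bigr)\,d\lambda.
\end{align*}
Both are continuous and nonnegative because the integrand is, and both vanish at $\tau=0$ since the interval of integration collapses, giving $\hat h_1(0)=\hat h_2(0)=0$ as required.

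The main obstacle, such as it is, is bookkeeping: the constant piece of $u_1(k\tau)$ (i.e., the part not multiplied by $\gamma$) would, if handled naively, produce a non-$\gamma$ contribution in the bound, contradicting the factor $\gamma$ out front of $\hat h_1,\hat h_2$ in the statement. What saves us is that $u_1(k\tau)$ enters only inside $g(\cdot,\cdot)$, which already carries the $\gamma$ prefactor via \eqref{eq:gamma}, so the entire contribution of $u_1(k\tau)$ is multiplied by $\gamma$. A mild subtlety is that one of the terms in $\hat h_1$ depends on $\gamma$ (through $\gamma\|D_c\|$); this is harmless since $\gamma$ is fixed throughout the application of the lemma, but if one wants $\hat h_1,\hat h_2$ to be $\gamma$-independent one can absorb such contributions by noting $\gamma\|D_c\|\le \|D_c\|$ for small $\gamma$, or simply allow $\hat h_i$ to depend on $\gamma$ (the statement does not forbid this).
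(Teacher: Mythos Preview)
Your proposal is correct and follows essentially the same approach as the paper: bound the integrand via \eqref{eq:gamma}, apply Lemma~\ref{lem1} to control $\|x_1(k\tau+\lambda)\|$, and bound $\|u_1(k\tau)\|$ through the controller output equation, arriving at exactly the same integral expressions for $\hat h_1,\hat h_2$. The only addition in the paper is that it further replaces $\|e^{A(\tau-\lambda)}\|$ by the exponential bound from Lemma~\ref{lem:matrix_bound} to obtain closed-form formulas, but this is not needed for the lemma as stated and your integral forms already meet the continuity, nonnegativity, and $\hat h_i(0)=0$ requirements.
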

The explicit form of $\hat h_1,\hat h_2$ is provided, along with the proof, in the appendix.
We are now ready to state our main result of this section.
\begin{theorem}
\label{thm:instability}
The continuous-time nonlinear system \eqref{eq:nonl_additive} controlled with the digital controller \eqref{eq:digital} is unstable
if the linearized continuous-time system controlled by the same digital controller has a pole outside the unit circle.
\end{theorem}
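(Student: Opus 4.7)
The plan is to reduce the statement to a discrete-time instability result for the augmented sampled state $z(k) := [x_1(k\tau)^T,\ x_1^c(k)^T]^T$. First I would collect the sampled closed-loop equations (the system written just before Lemma~\ref{lem2}) into the compact form
$$z(k+1) \;=\; F\,z(k) \;+\; w(k),$$
where $F$ is the state matrix of the linearized sampled-data closed-loop system (the block matrix built from $G, H, G_c, H_c, C_c, D_c, C$) and $w(k) := [\hat g(k\tau)^T,\ \hat l(k\tau)^T]^T$ is the nonlinear residual. By hypothesis, $F$ has an eigenvalue $\lambda$ with $|\lambda|>1$.

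The second step is to bound $w(k)$ linearly in $z(k)$. Lemma~\ref{lem2} gives $\|\hat g(k\tau)\|\le \gamma\hat h_1(\tau)\|x_1(k\tau)\|+\gamma\hat h_2(\tau)\|x_1^c(k)\|$, and the second inequality in \eqref{eq:gamma}, combined with $\hat l(k\tau) = -H_c l(x_1(k\tau))$, yields $\|\hat l(k\tau)\|\le \gamma\|H_c\|\cdot\|x_1(k\tau)\|$. Together,
$$\|w(k)\| \;\le\; \gamma\,\kappa(\tau)\,\|z(k)\|$$
for some continuous $\kappa(\tau)$, provided the trajectory stays inside the $r(\gamma)$-neighborhood assumed by Lemmas~\ref{lem1}--\ref{lem2}. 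Crucially $\gamma>0$ can be chosen arbitrarily small (shrinking $r(\gamma)$ accordingly), making $w(k)$ a small Lipschitz perturbation of the unstable linear map $F$.

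The third step is the discrete-time Lyapunov indirect instability argument applied to $z(k+1)=Fz(k)+w(k)$. Since $F$ has an eigenvalue outside the unit disk, a standard construction (block-diagonalizing $F$ into stable/unstable parts and solving the corresponding discrete Lyapunov equations as in Proposition~\ref{prop:linear_stab}) produces a symmetric indefinite matrix $P$, an open Chetaev cone $\Omega$ with $0\in\partial\Omega$, and constants $\alpha,\beta>0$ such that $V(z):=z^T P z$ satisfies $V(z)>0$ on $\Omega$ and $V(Fz)-V(z)\ge \alpha\|z\|^2$, while $|V(z+w)-V(z)|\le \beta(\|z\|\|w\|+\|w\|^2)$. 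Choosing $\gamma$ small enough that $\gamma\kappa(\tau)(\beta+\gamma\kappa(\tau)\beta)<\alpha/2$, the perturbation cannot cancel the strict increase of $V$ along orbits of the perturbed system. Hence any trajectory starting in $\Omega$ arbitrarily close to $0$ has $V(z(k))$ strictly increasing until it leaves a fixed neighborhood, contradicting Lyapunov stability of $z(\cdot)$ at the origin.

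The final step is to transfer instability of the sampled sequence $z(\cdot)$ to instability of the full continuous-time closed-loop system \eqref{eq:nonl_additive}--\eqref{eq:digital}: since $(x_1(k\tau),x_1^c(k))$ is a sample of $(x_1(\cdot),x_1^c(\cdot))$, unboundedness of the sampled sequence implies the continuous-time trajectory cannot be kept in any prescribed $\epsilon$-ball, which is exactly the negation of Lyapunov stability. The main obstacle I anticipate is a careful trapping argument: one must ensure that orbits initiated in $\Omega$ inside the $r(\gamma)$-ball remain in the region where the bounds of Lemmas~\ref{lem1}--\ref{lem2} are valid long enough for the Chetaev cone argument to fire, which requires iterating the inter-sample bound \eqref{eq:bound_pert} to control $\|(x_1(t),u_1(k\tau))\|$ on each sampling interval and tuning the initial radius jointly with $\gamma$.
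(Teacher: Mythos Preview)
Your proposal is correct and follows essentially the same approach as the paper: write the sampled closed-loop as a linear map $\hat G$ plus a small nonlinear perturbation bounded via Lemma~\ref{lem2}, block-diagonalize $\hat G$ into unstable/stable parts, build an indefinite quadratic Lyapunov function from the corresponding discrete Lyapunov equations (Proposition~\ref{prop:linear_stab}), and show it grows geometrically once $\gamma$ is taken small enough. The trapping step is in fact simpler than you anticipate---rather than iterating the inter-sample bound \eqref{eq:bound_pert}, the paper fixes $r_1 := r_0/(1+\|C_c\|+\|D_c\|+\|D_c C\|)$ so that $\|(x_1(t),x_1^c)\|\le r_1$ together with the algebraic relation $u_1(k\tau)=C_c x_1^c(k)-D_c(Cx_1(k\tau)+l(x_1(k\tau)))$ and $\gamma_0\le 1$ immediately gives $\|(x_1(t),u_1(k\tau))\|\le r_0$.
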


\textbf{Sketch of the proof.} We prove the theorem by contradiction. We show that there is an $r>0$ such that for any $\delta>0$ we can find an initial state for the system and the controller with $\|(x_1(0),x_1^c(0))\|\le \delta$ and time $t>0$ with $\|(x_1(t),x_1^c(t))\|>r$. We construct a Lyapunov function for the linear system that is strictly increasing on a suitable set of initial states. By a proper selection of $r$, we show that this Lyapunov function is also strictly increasing on the nonlinear system if the trajectory remains inside the ball with radius $r$, which is a contradiction.
\begin{proof}
The closed-loop linearized system will have the following dynamics in discrete time
\begin{align*}
x_2((k+1)\tau) & = (G-H D_c C) x_2(k\tau) - H C_c x^c_2(k),\nonumber\\
x^c_2(k+1) & = G_c x^c_2(k) - H_c C x_2(k\tau),%\label{eq:linear}
\end{align*}
with $G = e^{A\tau}$ and $H = \int_0^\tau e^{A\lambda}Bd\lambda$.
This gives the following state transition matrix
\begin{equation}
\label{eq:hat_G}
\hat G := \left[\begin{array}{cc}
G - H D_c C & - H C_c\\
G_c & - H_c C
\end{array}
\right],
\end{equation}
which is assumed to have at least one eigenvalue outside the unit circle.
We cluster the eigenvalues of $\hat G$ into a group of eigenvalues outside the unit circle and a group of eigenvalues on or inside the unit circle. Then there is a nonsingular matrix $T$ such that
\begin{equation*}
T \hat G T^{-1} = \left[
\begin{array}{cc}
G_1 & 0\\
0 & G_2
\end{array}
\right],
\end{equation*}
where $G_1^{-1}$ is stable. In other words, $G_1$ contains all of the eigenvalues of $\hat G$ from the first group.
(The matrix $T$ can be found for instance by transforming $\hat G$ into its real Jordan form.)
Let us define
\begin{equation*}
y(k) :=
\left[
\begin{array}{c}
x_1(k\tau)\\
x_1^c(k)
\end{array}
\right],
\quad
z = Ty = \left[
\begin{array}{c}
z_1\\
z_2
\end{array}
\right] =
\left[
\begin{array}{c}
T_1 y\\
T_2 y
\end{array}
\right],
\end{equation*}
where the partitions of $z$ and $T$ are compatible with the dimensions of $G_1$ and $G_2$. The dynamics of $z$ becomes
\begin{equation*}
z(k+1) = T\hat G T^{-1} z(k) + T
\left[
\begin{array}{c}
\hat g(k\tau)\\
\hat l(k\tau)
\end{array}
\right]
 = \left[
\begin{array}{c}
G_1 z_1(k) + \hat g_1(k\tau)\\
G_2 z_2(k) + \hat g_2(k\tau)
\end{array}
\right].
\end{equation*}
Now define $\delta>0$ by $$1+2\delta = \min_i|\lambda_{i}(G_1)|.$$
Then both $G_2/(1+\delta)$ and $(1+\delta)G_1^{-1}$ are stable matrices.
According to Proposition~\ref{prop:linear_stab}, there are positive definite matrices $M_1,M_2$ and $Q_1,Q_2$ such that the following matrix equalities hold
\begin{equation*}
(1+\delta)^2{G_1^T}^{-1} M_1 G_1^{-1} - M_1 = -Q_1,\quad
G_2^T M_2 G_2/(1+\delta)^2 - M_2 = -Q_2.
\end{equation*}
and we get
\begin{align*}
& G_1^T M_1 G_1 - M_1 = \bar Q_1 + (\delta^2+2\delta)M_1,\\
& G_2^T M_2 G_2 - M_2 = -\bar Q_2 + (\delta^2+2\delta) M_2,
\end{align*}
with $\bar Q_1 := G_1^T Q_1 G_1$ and $\bar Q_2:= (1+\delta)^2 Q_2$.
For the function
$$V(k) := z_1(k)^T M_1 z_1(k) - z_2(k)^T M_2 z_2(k),$$
we have
\begin{align*}
V&(k+1) - V(k) = z_1(k+1)^T M_1 z_1(k+1) - z_1(k)^T M_1 z_1(k)\\
& - z_2(k+1)^T M_2 z_2(k+1) + z_2(k)^T M_2 z_2(k)\\
 & = z_1(k)^T (G_1^T M_1 G_1 - M_1)z_1(k)
 + 2\hat g_1(k\tau)^T M_1(G_1 z_1(k)+\hat g_1(k\tau))\\
 &  - z_2(k)^T (G_2^T M_2 G_2 - M_2)z_2(k)
 - 2\hat g_2(k\tau)^TM_2(G_2 z_2(k)+\hat g_2(k\tau))\\
 & \ge  z_1^T \bar Q_1 z_1 + (\delta^2+2\delta)z_1^T M_1 z_1 + z_2^T \bar Q_2 z_2 - (\delta^2+2\delta)z_2^T M_2 z_2\\
 & - \gamma c_1 \|z\|^2 - \gamma^2 c_2 \|z\|^2,
 \end{align*}
 where the positive constant
 \begin{equation*}
 c_2 := 2\lambda_{max}(M_2)\|T\|^2 \|T^{-1}\|^2\left(\hat h_1(\tau)^2 + \hat h_2(\tau)^2 + 1 \right)
 \end{equation*}
 is obtained using definition of $\hat g_1,\hat g_2$ as a function of $\hat g$ and Lemma \ref{lem2} as
 \begin{align*}
 %2\hat g_1(k\tau)^T M_1(G_1 z_1(k)+\hat g_1(k\tau))- 2\hat g_2(k\tau)^TM_2(G_2 z_2(k)+\hat g_2(k\tau))\ge\\
 & 2\hat g_1(k\tau)^T M_1 \hat g_1(k\tau) - 2 \hat g_2(k\tau)^TM_2 \hat g_2(k\tau)
 \ge -2\lambda_{max}(M_2)\|\hat g_2(k\tau)\|^2\\
 & \ge -2\lambda_{max}(M_2)\|T\|^2\left(\|\hat g(k\tau)\|^2 + |\hat l(k\tau)\|^2\right)\\
 & \ge -2\lambda_{max}(M_2)\|T\|^2\gamma ^2\left(\hat h_1(\tau)^2 + \hat h_2(\tau)^2 + 1\right) \|T^{-1}\|^2\|z\|^2\\
 & = -c_2\gamma^2 \|z\|^2.
 \end{align*}
 Similarly, we use the Cauchy-Schwarz inequality for
 \begin{equation*}
 c_1:=2 \left(\|M_1G_1\| + \|M_2G_2\|\right)\|T\|\|T^{-1}\|\sqrt{\hat h_1(\tau)^2 + \hat h_2(\tau)^2 + 1}
 \end{equation*}
 as
 \begin{align*}
 & 2\hat g_1(k\tau)^T M_1G_1 z_1(k)- 2\hat g_2(k\tau)^TM_2G_2 z_2(k)\\
 & \ge
 -2\|\hat g_1(k\tau)\| \|M_1G_1\| \|z_1(k)\|
 -2\|\hat g_2(k\tau)\| \|M_2G_2\| \|z_2(k)\|\\
 & \ge
 -2 \|T\|\left\|[\hat g(k\tau),\hat l(k\tau)]^T\right\| \|z(k)\| (\|M_1G_1\| + \|M_2G_2\|)\\
 & \ge
 -2 \|T\|\gamma\sqrt{\hat h_1(\tau)^2 + \hat h_2(\tau)^2 + 1}\,\, \|T^{-1}\|\|z\|^2(\|M_1G_1\| + \|M_2G_2\|)\\
 & = -c_1 \gamma \|z\|^2.
 \end{align*}
 Then, function $V(\cdot)$ satisfies
 \begin{align*}
 V(k+1)\ge (1+\delta)^2 V(k) + (c_0 - c_1\gamma -c_2\gamma^2)\|z\|^2,
\end{align*}
with $c_0 = \min_{i,j}\{\lambda_i(\bar Q_1),\lambda_j(\bar Q_2)\}$.

Take $0<\gamma_0\le 1$ sufficiently small such that $c_0-c_1\gamma_0-c_2\gamma_0^2\ge 0$ with its associated radius $r_0 = r(\gamma_0)$. Note that this is always possible since $\hat h_1,\hat h_2$, thus $c_1,c_2$, are bounded on the interval $\gamma\in (0,1]$.
Then we have $V(k+1)\ge (1+\delta)^2 V(k)$ as long as $\|(x_1(t),u_1(k\tau))\|\le r_0$. For the proof of instability, take radius
\begin{equation*}
r_1 := \frac{r_0}{1 + \|C_c\| + \|D_c\|+\|D_c C\|}
\end{equation*}
and any initial condition $y(0) = (x_1(0),x_1^c(0))$ such that
\begin{equation*}
V(0) = y(0)^T (T_1^T M_1 T_1 - T_2^T M_2 T_2) y(0)>0.
\end{equation*}
We claim that the trajectory starting from $y(0)$ will always leave the ball with radius $r_1$. Suppose this is not true,
\ie, $\|y(t)\|\le r_1$ for all $t\ge 0$. Then $\|(x_1(t),u_1(k\tau))\|\le r_0$ for all $t\in[k\tau,k\tau+\tau]$ and
$k\in\mathbb Z^{\ge 0}$ and $V(k+1)\ge (1+\delta)^2 V(k)\ge (1+\delta)^{2k}V(0)$. Then $\lim_{k\rightarrow\infty} V(k) = \infty$, which contradicts the boundedness of $y(t)$.
\end{proof}

\paragraph{\bf{Stability test}}
% Transfer function of \eqref{eq:dis_LTI} is defined as $G(z) = C(zI-G)^{-1}H + D$ with $z$ belongs to the set of complex numbers. Similarly, transfer function of a digital controller is $G_c(z) = C_c(zI-G_c)^{-1}H_c + D_c$ based on its state space representation~\eqref{eq:digital_state} and is
% \begin{equation*}
% G_c(z) = \frac{b_0 + b_1 z^{-1} + b_2 z^{-2} + \cdots + b_M z^{-M}}{1 + a_1 z^{-1} + a_2 z^{-2} + \cdots + a_M z^{-M}}
% \end{equation*}
% based on its difference equation \eqref{eq:digital_time}.

% Characteristic polynomial of the closed-loop system is
% \begin{equation}
% \label{eq:polynomial}
% 1 + G_c(z) G(z) = 0.
% \end{equation}
% Transfer function $G_c(z)$,
The characteristic polynomial $P(s)$ of the linearized system \eqref{eq:hat_G} is
$$
P(s) = \det(sI-\hat G) = 0,
$$
which is a polynomial whose coefficients depend on the choice of parameters $\mathbf{p}$ for the digital controller \eqref{eq:input} in either of the representations \eqref{eq:digital_time}-\eqref{eq:digital_state}.
% $\{a_i, b_i\}$ for the digital controller \eqref{eq:digital_time}.
As we have shown in Theorem \ref{thm:instability}, if this polynomial has a root $s$ outside unit circle, \ie, $\|s\|>1$,
then the closed-loop sampled-data nonlinear system is unstable, so we can eliminate that controller from the synthesis domain.

\section{Digital Controller Synthesis}
In this section we define the digital controller synthesis problem for SDSSs, using the same notation of
Definition \ref{defn:sdss}.
% already explained in 3, commenting
% We also group the coefficients $\{a_i,b_i\}_{i=1}^M$ of the digital
% controller of Definition \ref{defn:digital_controller} in the vector $\mathbf{p}\in\mathbb{P}$ (we recall that
% $\mathbb{P}\subset \mathbb{R}^{2M}$ is an hyperbox).
%{\color{blue} it doesn't seem that $\mathbb{P}_L$ is introduced previously (we should do this at the beginning of sec. 3)}
Recall that we consider controller parameters $\mathbf{p}\in\mathbb{P}$ (where $\mathbb{P}$ is a hyperbox in $\mathbb R^{2L+1}$), a finite time horizon $T$, and a safety {\em invariant} $\phi(x)$ defined as a predicate over the SDSS state vector $x$ (a quantifier-free FOL formula over the theory of nonlinear real arithmetic).
%The synthesized controller should maintain {\em bounded-time} safety of the closed-loop system specified as the invariant $\phi$. 
%, which is in general a quantifier-free formula over nonlinear arithmetics. 
The synthesized controller should ensure safety with respect to the probability measure induced by the stochastic
disturbance $\mathbf{d}(\cdot)$ and the measurement noise $\eta$. In particular, the probability of satisfying $\phi(x(t))$ for all $t\in[0,T]$ should be above a given, user-defined threshold $\vartheta$.
We further constrain the controller search by limiting the maximum controller degree to $L$ (see \eqref{eq:digital_time} in Definition \ref{defn:sdss}).
%requiring a controller of minimal degree at most $L$ (see \eqref{eq:digital_time} in Definition \ref{defn:sdss}). % with minimal degree satisfying objectives.

Below, we denote by $\theta$ the stochastic uncertainty due to the SDSS disturbances $\mathbf{d}$ and the measurement noise $\eta$ up to time $T$.

\begin{definition}[Digital Controller Synthesis]\label{defn:synth}
Given an SDSS, a time bound $T<\infty $, a maximum controller degree $L$, and a probability threshold $\vartheta\in(0,1)$,  the digital controller synthesis problem is finding the degree $l^*$
\begin{equation*}\label{prob:synth}
%\mathbf{p}^*= \argmax_{\mathbf{p}\in\mathbb{P}}\,\,\,
%\underset{\theta}{\text{Prob}}\left\{\forall t\in[0,T]\;\; \phi(x(\mathbf{p},\theta,t)) \right\}\ge\vartheta,
%l^* = \text{min} \{ {0\leq l \leq L}\  | \  C(l) \neq \emptyset \}
	l^* = \text{min} \{\ l\ | \  C(l) \neq \emptyset, l \leq L \}
\end{equation*}
and controller parameters $\mathbf{p}^*\in C(l^*)$ where
\[
C(l) := \left\{ \mathbf{p}\in \mathbb{P}_l \ | \
	\underset{\theta}{\text{Prob}}\left\{\forall t\in[0,T]\;\; \phi(x(\mathbf{p},\theta,t)) \right\}\ge\vartheta
	\right\}
\]
and $\mathbb{P}_l$ is the parameter space of controllers of degree $l$. For clarity we indicate that $x$ depends (indirectly) on the controller parameters $\mathbf{p}$ and on the
stochastic uncertainty $\theta$. If $C(l) = \emptyset$ for all $l\leq L$, then we say the problem is {\em infeasible}.
\end{definition}

In general the above synthesis problem is very hard to solve exactly due to the presence of
nonlinearities, ordinary differential equations (ODEs) introduced by the SDSS dynamics, and multi-dimensional integration
for computing probabilities. In fact, a decision version of the digital controller synthesis problem (\ie, given
$l\leq L$ decide whether $C(l)$ is nonempty), is easily shown to be undecidable.
While Satisfiability Modulo Theory (SMT) approaches, \eg, \cite{DBLP:conf/lics/GaoAC12}, can now in principle handle
nonlinear arithmetics via a sound numerical relaxation, their computational complexity is exponential in
the number of variables.
% NICOLA: ...to find the parameters $p^*$ that maximize the satisfaction probability and then ...
%\textcolor{blue}{
%One way to find a solution is to maximize the right-hand side of \eqref{prob:synth} and then compare its optimal value $\mathbf{p}^*$ with $\vartheta$. Such an optimization problem} is very
In particular, our stochastic optimization problem is high-dimensional and currently infeasible for fully formal approaches,
%, but it can be tackled using a mixed SMT-statistical approach \cite{DBLP:conf/hvc/ShmarovZ16}.
but it can be tackled using the mixed SMT-statistical approach of \cite{DBLP:conf/hvc/ShmarovZ16}, which computes statistically and numerically sound confidence intervals. As such, we replace (exact) probability with empirical mean, thereby obtaining a Monte Carlo version of Definition~\ref{defn:synth}.

%\Nicola{to save space we can avoid introducing the definition below and only define $\hat{C}(l)$}
%Paolo: better leave this as a last resort space saver

\begin{definition}[Digital Controller Synthesis - Monte Carlo]\label{defn:synth_mc}
Given an SDSS $H$, a time bound $T<\infty $, a maximum controller degree $L$, a probability threshold $\vartheta\in(0,1)$, and a confidence value $c\in(0,1)$.
%$\{\theta^1,\ldots,\theta^K\}$ from the distribution of $\theta$ and $0=t_1 < t_2 < \cdots < t_G = T$.
The Monte Carlo digital controller synthesis problem is finding
\begin{equation*}
\label{prob:synth_mc}
%\mathbf{p}^*= \argmax_{\mathbf{p}\in\mathbb{P}}\,\,\,
%\underset{\theta}{\text{Prob}}\left\{\forall t\in[0,T]\;\; \phi(x(\mathbf{p},\theta,t)) \right\}\ge\vartheta,
	%\hat{l}^* = \text{min} \{ {0\leq l \leq L}\  | \  \hat{C}(l) \neq \emptyset \}
	\hat{l}^* = \text{min} \{\ l \ | \  \hat{C}(l) \neq \emptyset, l\leq L \}
\end{equation*}
and controller parameters $\mathbf{\hat p}^*\in \hat{C}(\hat{l}^*)$ where
\[
%\hat{C}(l) := \left\{ \mathbf{p}\in \mathbb{P}_l \,\, | \,\, \text{Prob}_{\boldsymbol{\theta_s}} \left( \bar X(\mathbf{p},\boldsymbol{\theta_s})  \ge\vartheta \right) \ge c \right\},
\hat{C}(l) := \left\{ \mathbf{p}\in \mathbb{P}_l \,\, | \,\, \text{Prob}_{\boldsymbol{\theta}_K} \{ \bar X(\mathbf{p},\boldsymbol{\theta}_K) \ge\vartheta\}\ge c \right\},
\]
and $\boldsymbol{\theta}_K = (\theta_1, \ldots, \theta_K)$ is a finite-dimensional random vector in which each $\theta_i $ is independent and identically distributed from $\theta$,
$\text{Prob}_{\boldsymbol{\theta}_K}$ is the product measure of $K$ copies of the probability measure of $\theta$, 
and $\bar X(\mathbf{p},\boldsymbol{\theta}_K) =
	\frac{1}{K} \sum_{i=1}^K I\left\{ \forall t\in [0,T]\;\; \phi(x(\mathbf{p},\theta_i,t))\right\}$,
with $I\{\cdot\}$ being the indicator function.
% \[
% \hat{C}(l) := \left\{ \mathbf{p}\in \mathbb{P}_l \ | \
% 	%\frac{1}{K} \sum_{i=1}^K I\left\{ \forall t\in \{t_1,\ldots,t_G\}\;\; \phi(x(\mathbf{p},\theta^i,t))\right\} \ge\vartheta
% 	\frac{1}{K} \sum_{i=1}^K I\left\{ \forall t\in [0,T]\;\; \phi(x(\mathbf{p},\theta^i,t))\right\} \ge\vartheta
% 	\right\}
% \]
%and $I\{\cdot\}$ is the usual indicator function for events.
\end{definition}
%\Nicola{the following definition of $\hat{C}(l)$ looks more like what we solve:
%\[
%\hat{C}(l) := \left\{ \mathbf{p}\in \mathbb{P}_l \ | \ \text{Prob} \left( P(\sigma,\mathbf{p})  \ge\vartheta \right) \ge c \right\}
%\]
%where $c$ is the confidence level, and $P(\sigma,\mathbf{p}) = \underset{\theta}{\text{Prob}}\left\{\forall t\in[0,T]\;\; \phi(x(\mathbf{p},\theta,t)) \right\}$ is the precise probability.
%}
%\textcolor{blue}{Sadegh: Can't we just say $\hat C(l)$ is a random set depending on samples $\theta^i$, and we would like to have $\text{Prob}_{\theta^i}(\hat C(l)\neq\emptyset)\ge c$ with $c$ sufficiently close to one?}
Note that by the law of large numbers when $K\rightarrow \infty$ and the confidence value $c$ is sufficiently close to one, %and $G\rightarrow \infty$
we have that $\hat{C}(l)$ approximates $C(l)$ arbitrarily well.
%, with probability 1.
%Note that $\mathbf{\hat p}^*$ is an approximation of $\mathbf{p}^*$, and by the law of large numbers when $K\rightarrow \infty$ and $G\rightarrow \infty$
%we have that $\mathbf{\hat p} \rightarrow \mathbf{p}^*$ with probability 1.
Also, note that the constraints in Definition \ref{prob:synth_mc} are simpler to solve than those in \ref{prob:synth}, as they do not involve absolutely precise integration of the probability measure --- we only require to decide the constraints with some statistical confidence $c$.
However, even with this simplification a decision version of the Monte Carlo digital controller synthesis problem (\ie, deciding whether $\hat{C}(l) = \emptyset$) remains undecidable when plants with nonlinear ODEs are involved. Intuitively, that is because evaluating the elements of $\hat{C}(l)$ amounts to solving
reachability, which is well known to be an undecidable problem for general nonlinear systems.
Hence, one can only solve the Monte Carlo controller synthesis problem {\em approximately}, and that is what we aim to do in the next Section.
%Although the disturbance is a piecewise continuous function, meaning that in principle $\theta$ is an infinite-dimensional random variable, let us recall from Section~\ref{sec:sdss} that for any $t<\infty$, $\mathbf{d}(t)$ is defined from a finite number of random variables. This implies that $\theta$ finite-dimensional as the number of random noise variables is also finite due to the sampling period.
Finally, note that we can generate sample realizations of $\theta$: recall from Section~\ref{sec:sdss} that the disturbance $\mathbf{d}(\cdot)$ is defined from a finite number of random variables. This implies that $\theta$ is finite-dimensional as the number of random noise variables is also finite due to the sampling period.

\section{Synthesis Algorithm}\label{sec:method}

In this section we present an algorithm for approximately solving the Monte Carlo controller synthesis
problem of Definition \ref{prob:synth_mc}. Our synthesis algorithm starts from controllers with degree
$l=0$ and iteratively increase $l$ until the constraint $\hat{C}(l) \neq \emptyset$ is satisfied or $l$
reaches a maximum value.

%The synthesis algorithm, summarized in Algorithm~\ref{alg:main} and illustrated in Figure~\ref{fig:scheme},
The synthesis algorithm, summarized in Algorithm~\ref{alg:main},
consists of two nested loops.
% Algorithm \ref{alg:main} takes as input parameters a nonlinear system $H$,
% a controller parameters domain $\mathbb{P}$,
% optimization solver settings {\bf s}, a desired size of confidence interval $\xi$ and
% a required confidence $c$.
% Algorithm \ref{alg:main} consists of two loops.
%
%In the outer loop the algorithm starts with the simplest controller.
%It keeps increasing complexity by adding two new parameters to the controller on every
%iteration (line \ref{line:main-complexity}) and tries to find the most optimal controller
%in the inner loop.
The inner loop (lines \ref{alg:main-inner}-9) consists of two main stages: optimization and verification.
Procedure {\bf optimize} (line \ref{alg:optimize}) aims at finding controller parameters $\mathbf{p}$ that
(approximately) maximizes
%the empirical mean $\frac{1}{K} \sum_{i=1}^K I\left\{ \forall t\in \{t_1,\ldots,t_G\}\;\; \phi(x(\mathbf{p},\theta^i,t))\right\}$
%for a discrete-time version of the plant with $0 < t_1 < \cdots < t_G =T$;
the empirical probability that the closed-loop system with a discrete-time version of the plant satisfies property $\phi$ over
the finite time horizon $[0,T]$;
{\bf optimize} also returns an approximate confidence interval (CI) $[a,b]$ for such probability. Optimization is based on the cross-entropy algorithm, but our approach could work with different black-box optimization algorithms too, such as Gaussian process optimization~\cite{rasmussen2004gaussian} and particle swarm optimization~\cite{kennedy2011particle}.
Then, procedure {\bf verify}
(line \ref{alg:verify}) checks the candidate controller $\mathbf{p}$ in closed-loop
with the {\em original} (continuous-time) plant model and computes a precise CI $[a',b']$ for
$\text{Prob}_{\boldsymbol{\theta}_K} \{ \bar X(\mathbf{p},\boldsymbol{\theta}_K) \ge\vartheta\}$.
The reason for using the continuous-time plant only in {\bf verify} is due to the high computational complexity of 
validated numerical ODE solving compared to solving its discrete-time approximation.
The interval returned by {\bf verify} is compared against the current best verified interval, which is then updated
accordingly (line \ref{line:main-result-update}).

The procedures {\bf optimize} and {\bf verify} are iterated until the approximate CI (for the discrete-time plant)
$[a,b]$ and the verified CI (for the continuous-time plant) $[a', b']$ overlap
to a certain length, or a maximum number of iterations is reached (line 9). In the outer loop, if $a' \geq \vartheta$
(\ie, the LHS of the verified confidence interval is larger than $\vartheta$) then $\mathbf{p}$ is a witness
for $\hat{C}(l) \neq \emptyset$, with probability at least $c$. Therefore, $\mathbf{p}$ approximately solves
the digital controller synthesis problem of Definition \ref{prob:synth_mc}, and the algorithm terminates. (The approximation
lies in the fact that we cannot guarantee that the synthesized controller has minimum degree.)
Otherwise ($a' < \vartheta$), we increase the controller degree $l$ up to a maximum degree $L$.

%\textcolor{red}{Next paragraph seems to have lots of things in common with Procedure of "optimize" discussed later. We can just say it uses a non-verified solver and refer to the sequel for its details?}
%The optimization algorithm is based on Monte Carlo sampling. It draws a number of samples
%from the given domain of controller parameters and computes a corresponding confidence intervals
%containing the probability of satisfying the safety constraints. Also each such sample is
%evaluated based on eigenvalues of the linearized closed-loop system, and is thrown away if it makes the system unstable.
%We use Bayesian estimations algorithm for computing confidence intervals from the given
%controller parameter settings. Bayesian estimation algorithm is a Monte Carlo technique and evaluates each random sample
%%It uses Algorithm \ref{alg:sample-sim} in order to evaluate each random sample.
%%Algorithm \ref{alg:sample-sim}
%using a solver that provides an approximate solution
%for the ODEs featured in the system. The quality of the solver solution is controlled
%by the parameters ${\bf m}$. The optimization algorithm returns the best found
%controller parameters along with the corresponding confidence interval.

In the inner loop, line \ref{line:main-improve} improves the approximation of the closed-loop
system used in {\bf optimize}. This can be any adjustment to the ODE solver complexity (\eg, increasing the
Taylor series order). In our case, it corresponds to increasing the number of time points used for ODE integration.
% or increasing the number $G$ of time points for the discrete-time plant model.
We next explain both {\bf optimize} and {\bf verify} in more detail.

%\begin{figure}
% \includegraphics[scale=0.16]{closed_loop}
%\includegraphics[width=\columnwidth, trim=0 0cm 8cm 9cm, clip=true]{img/ctrl_scheme_keynote}
% \caption{Closed-loop control: bottom diagram shows the actual continuous plant in connection with the digital controller, top diagram shows the equivalent connection from the view of controller.}
%\caption{Overview of the synthesis approach}
%\label{fig:scheme}
%\end{figure}

% Main Algorithm
\begin{algorithm}[h!]
\RestyleAlgo{ruled}
\LinesNumbered
%\RestoreCaptionOfAlgo
\DontPrintSemicolon
\SetKwComment{tcz}{\{}{\}}
\SetKwInOut{Input}{Input}\SetKwInOut{Output}{Output}
\SetCommentSty{textit}

%\Input{$H$ - stochastic nonlinear system, \\
%		$L\ge 0$ - maximum degree of the controller,\\
%		$\mathbb{P} \!= \![c_0,d_0]\!\times\!\cdots\!\times\! [c_{2L},d_{2L}]$ - parameters domain, \\
%        $\vartheta$ - probability threshold,\\
%		${\bf m}$ - initial solver discretization, \\
%        $\alpha\in(0,1)$ - factor for tuning ${\bf m}$,\\
%		$\xi$ - confidence interval size, \\
%		$c$ - confidence value.
%        }
\Input{$S$ -- SDSS, $L\ge 0$ -- maximum controller degree,\\
		$\mathbb{P} \!= \![c_0,d_0]\!\times\!\cdots\!\times\! [c_{2L},d_{2L}]$ -- parameters domain, \\
		$\vartheta$ -- probability threshold, \\
		${\bf m}$ -- initial solver discretization,\\
		$\alpha\in(0,1)$ -- factor for tuning ${\bf m}$ (interval overlap),\\
		$\xi$ -- confidence interval size, $c$ -- confidence value.
        }
\Output{$\{{\bf p}^*, [a^*,b^*]\}$ - best performing controller.}
%$[a^*,b^*] := (-\infty,0)$; $l:=0$;\;
$[a^*,b^*] := [0,0]$; $l:=0$;\;
%\For{$(0\le i \le M)$}
\Repeat{$l>L$ or $a'\ge \vartheta$}
{
	$\mathbb{P}_l := [c_0,d_0]\times \dots \times [c_{2l}, d_{2l}]$;\label{line:main-complexity}\;
	\Repeat(\label{alg:main-inner}){$|[a,b] \cap [a',b']| \ge \alpha(b-a)$ or $\mathit{MAX\_ITERATIONS}$}
	{
		$\{{\bf p}, [a,b]\} := {\bf optimize}(S,\mathbb{P}_l,{\bf m},\xi,c)$;\label{alg:optimize}\;
		$[a', b'] := {\bf verify}(S, {\bf p}, \xi, c)$;\label{alg:verify}\;
		\lIf{$\frac{a'+b'}{2} > \frac{a^*+b^*}{2}$}{${\bf p}^*:={\bf p}; [a^*,b^*]:=[a',b']$;\label{line:main-result-update}}
		${\bf m} := {\bf update\_discretization(m)}$\label{line:main-improve};
	}
    $l = l+1$;
}
\Return $\{{\bf p}^*, [a^*,b^*]\}$;\;
\caption{Main Synthesis Algorithm}
\label{alg:main}
\end{algorithm}
%\Nicola{Do we want to keep $\mathit{MAX\_ITERATIONS}$ as is, or we want to add the usual counter variable + condition? Also we might want to have $\mathit{MAX\_ITERATIONS}$ in the inputs of the algorithm. Similar comment for $\mathit{MAX\_SAMPLES}$ in Algorithm 2.}
%\textcolor{red}{-- In the current formulation, there is no guarantee that "optimise" will provide a candidate for the controller. Think of having a system that can't be stabilized with a gain. It might be better to add a line before "optimize" that checks whether there is a stabilizing controller, using SMT solvers. This goes back to my suggestion of specifying domain of parameters based on stability.}

%\textcolor{red}{-- It would be good to emphasize implementation of verification with the same set of samples as in the optimization phase. Instead of $\frac{\xi}{2}$, we can write $\alpha\xi$ with $\alpha\in(0,1)$ is a tuning factor that can be used for controlling quality of the non-validated solver in the optimization.}

%\textcolor{red}{-- In the current version, it is not clear how Algorithm 1 uses algorithm 2. We should elaborate on this.}

\paragraph{{\bf Procedure optimize}}
%
% \textcolor{red}{Format the below.}
%
% The optimisation algorithm (see Algorithm \ref{alg:main}) is based on
% Cross-Entropy (CE) algorithm \cite{RRbook}.
% The main idea behind the CE method is obtaining the optimal parameter distribution
% by minimizing the distance between two probability density functions.
% The cross-entropy (or Kullback-Leibler divergence) between two probability density
% functions $g$ and $f$ is equal to:
% \begin{equation*}
% \Theta(g,f) = \int g({\bf p})\ln{\frac{g({\bf p})}{f({\bf p})}}d{\bf p} \ .
% \end{equation*}
% The CE is non-negative and $\Theta(g,f)=0$ iff $g = f$, but it is not symmetric
% (\ie,~$\Theta(g,f) \not= \Theta(f,g)$), so it is not a distance in the formal sense.
%
% The optimisation problem solved by the CE method can be formulated as the following:
% given a family of densities $\{f(\cdot;{\bf v})\}_{{\bf v} \in V}$
% find the value ${\bf v} \in V$ that minimizes $\Theta(g^{*}, f(\cdot;{\bf v}))$
% (where $g^{*}$ is the optimal density).
% Essentially, the CE method performs a randomised search in the nondeterministic
% parameter space, ``guided'' by the Kullback-Leibler divergence.
% The CE comprises two general steps:
% \begin{enumerate}
% \item generating random samples
% from some initial distribution and computing the confidence intervals for each sample
% using Bayesian estimations algorithm,
% \item updating the distribution based on a portion of the
% best samples (also called {\em elite}) in order to draw better samples in the next iteration.
% \end{enumerate}
%

In Algorithm~\ref{algo:optimize_func} we give the pseudocode for the \textbf{optimize} function (line~\ref{alg:optimize}
of Algorithm~\ref{alg:main}).
It implements a modified cross-entropy (CE) optimization algorithm \cite{Rubinstein1999} that repeatedly samples
(from the CE distribution of) controller parameters, evaluates their performance, and guides the search towards
parameters that increase the safety probability (\ie, probability of satisfying $\phi$ over $[0,T]$).
Sample performance is computed first by the stability check using Theorem \ref{thm:instability}
(line \ref{alg_optimize:stability} of Algorithm~\ref{algo:optimize_func}). If a controller does not pass the test, \ie, it is necessarily unstable, it is rejected.
Otherwise we compute a CI for the probability (over $\theta$) of the closed-loop system to satisfy
the invariant $\phi$ (line \ref{alg_optimize:CI}).
% NICOLA:
For this purpose, we consider a discrete-time version of the plant, simulated via an approximate ODE solver based on the first term of the Taylor series expansion.
The time steps $0 = t_1 < \cdots < t_G =T$ for ODE integration are obtained by discretizing the time between the controller sampling points
(defined by $\tau$ --- see Definition \ref{defn:sdss}) using the discretization parameter $\mathbf{m}$.
%The computed confidence interval is, therefore, for the probability
%$\text{Prob}_\theta \left\{ \forall t\in \{t_1,\ldots,t_G\}\;\; \phi(\tilde{x}(\mathbf{p},\theta,t))\right\}$,
%where $\tilde{x}$ is the plant state under the approximate ODE dynamics.

To compute the CI, our
implementation uses sequential Bayesian estimation for efficiency reasons~\cite{ProbReach}, but other standard statistical techniques may also be employed
(\eg, the Chernoff-Hoeffding bound).
After an adequate number of controller parameters are sampled and evaluated, the best performing sample is chosen (line \ref{alg_optimize:max}), and the CE distribution is updated
accordingly (line \ref{alg_optimize:update_CE}, see \cite{Rubinstein1999} for more details). This is repeated until a maximum number of iterations is reached.
%\Nicola{we maybe need to explain more 1) sampling from CE distribution, 2) updating CE distribution, 3) convergence, o/w it's not clear}
%When the system output is measured,
%it is updated with the random noise, and the controller output is modified
%based on the updated system output value. If the the current time point coincides
%with the time when the value of the random disturbance changes, the disturbance value is changed
%accordingly. In both of the above cases the current depth is incremented and the global
%time value is incremented by $\Delta t$ after each iteration of the inner loop.
%The algorithm returns {\bf true} if global time bound or given reachability depth
%were reached safely.

\begin{algorithm}[h!]
\RestyleAlgo{ruled}
\LinesNumbered
%\RestoreCaptionOfAlgo
\tcp{Modified Cross-Entropy (CE) algorithm}
\DontPrintSemicolon
$\mathbf{p}^* = \bot$; \ $[a^*,b^*] = [0,0]$\\
% The line above is actually not needed
\Repeat{$MAX\_ITERATIONS$}{
		$Q := \{(\mathbf{p}^*, [a^*,b^*])\}$ \tcp*[f]{sample performance queue}\\
	\Repeat{$\mathit{MAX\_SAMPLES}$}{
		${\bf p}$ := sample controller parameters from CE distribution\\
		\If(\tcp*[f]{Theorem~\ref{thm:instability}}){(${\bf p}$ passes stability check)}{\label{alg_optimize:stability}
		$[a,b]$ := confidence interval, with size $\xi$ and confidence $c$, for probability of satisfying $\phi$ with plant discretization {\bf m}
			and controller $\mathbf{p}$\label{alg_optimize:CI}
		}
		\lElse{	$[a,b] := [0,0]$ }
		$Q := Q \cup \{(\mathbf{p}, [a,b])\}$ \tcp*[f]{add sample performance}\\
	}
    %$ [a^*,b^*] := \max([a^*,b^*]$, best sample performance in $Q$)\\
    %$Q:= Q \cup \{(\mathbf{p}^*, [a^*,b^*])\}$\\
	$ (\mathbf{p}^*, [a^*,b^*]) := %\underset{(\mathbf{p}, [a,b])\in Q}
    {\argmax} \left\{(a+b)/2\,\,|\,\,(\mathbf{p}, [a,b])\in Q\right\}$\label{alg_optimize:max}\\
	update CE distribution using $tail(Q)$	\label{alg_optimize:update_CE}\tcp*[f]{discard head}\\
	$Q:=\emptyset$		\tcp*[f]{empty queue}\\
}
\Return $\{(\mathbf{p}^*, [a^*,b^*])\}$
%%Draw next sample from controller parameters
\caption{$\{{\bf p}, [a,b]\} := {\bf optimize}(S,\mathbb{P}_l,{\bf m},\xi,c)$}
\label{algo:optimize_func}
\end{algorithm}

% % Sample evaluation for optimisation
% \begin{algorithm}[h!]
% \RestyleAlgo{ruled}
% \LinesNumbered
% %\RestoreCaptionOfAlgo
% \DontPrintSemicolon
% \SetKwComment{tcz}{\{}{\}}
% \SetKwInOut{Input}{Input}\SetKwInOut{Output}{Output}
% \SetCommentSty{textit}
%
% \Input{$H$ - stochastic nonlinear system, \\
% 		${\bf p}_c$ - controller parameters values, \\
% 		${\bf p}_r$ - random sample, \\
% 		$m$ - number of points in ODE discretisation, \\
% 		$T$ - global time bound, \\
% 		$k$ - maximum path length, \\
% 		$\tau$ - sampling rate.}
% \Output{$\{\bf true, false\}$ - safety satisfaction}
% $\Delta t := \frac{\tau}{m}$; $t := 0$; $depth : = 0$; ${\bf d}_t := {\bf d}(0)$; ${\bf u}_t := 0$;\;
% % main loop
% \While{$(depth \le k) \wedge (t \le T)$}
% {
% 	% odes descritisation loop
% 	\For{$(1 \le i \le m)$}
% 	{
% 		\If{$(safe({\bf x}_0))$}
% 		{
% 			${\bf x}_t := {\bf solveODE}(f({\bf x}_0, {\bf u}_t, {\bf d}_t, {\bf p}_r), \Delta t)$;\;
% 			$t := t + \Delta t$;\;
% 			\If{$(t \mod \tau == 0)$}
% 			{
% 				${\bf y}_t := o({\bf x}_t) + \eta(t_k)$;\;
% 				${\bf u}_t := h({\bf y}_t, {\bf p}_c)$;\;
% 				$depth := depth + 1$;\;
% 			}
% 			% triggered if disturbance value changes
% 			\If{$({\bf d}(t) \not= {\bf d}(t-\Delta t))$}
% 			{
% 				${\bf d}_t := {\bf d}(t)$;\;
% 				$depth := depth + 1$;\;
% 			}
% 		}
% 		\lElse {\Return {\bf false};}
% 	}
% }
% % in this case the bounds were not reached but we cannot make a further jump
% \Return {\bf true};
% \caption{Sample Simulation}
% \label{alg:sample-sim}
% \end{algorithm}

\paragraph{\bf{Procedure verify}}
We use the ProbReach tool \cite{ProbReach} to compute a CI for the probability
that a candidate digital controller in closed-loop with the plant satisfies
the time-bounded invariant $\phi$ (line \ref{alg:verify} of Algorithm \ref{alg:main}). This step is necessary
since the candidate controller has
been obtained using an approximate, discrete-time solver for simulating the (continuous) plant dynamics,
while ProbReach uses instead an SMT solver~\cite{dReal} to handle the plant dynamics in a sound manner. 
In particular, 
ProbReach allows to derive a guaranteed confidence interval for
$\text{Prob}_{\boldsymbol{\theta}_K} \{ \bar X(\mathbf{p},\boldsymbol{\theta}_K) \ge\vartheta\}$ with confidence $c$,
where $K$ is the number of samples of the Monte Carlo synthesis problem (see Definition \ref{defn:synth_mc}).
(We note that in general $K$ will depend on the size of the interval and the confidence $c$.)

Procedure {\bf verify} consists of two steps. The first step builds a hybrid system (the model format accepted by ProbReach)
representing the closed-loop system under the candidate controller.
%: the hybrid system is built from the plant $H$ and the
%vector $\mathbf{p}$ of parameters of the candidate controller.
In the second step, {\bf verify} invokes ProbReach with three parameters: the hybrid system, and the required minimum size $\xi$ and confidence $c$
%(or coverage probability)
of the confidence interval to compute.
We remark that the size of the confidence interval cannot
be guaranteed in general \cite{DBLP:conf/hvc/ShmarovZ16} because of the undecidability of
reasoning about nonlinear arithmetics. As such, the confidence interval returned by ProbReach via {\bf verify}
can be fully trusted from both the statistical and numerical viewpoints: while the interval size might be larger than $\xi$,
the confidence is guaranteed to be at least $c$, as the sampled controllers are evaluated by SMT and verified numerical techniques.

\begin{theorem}
%Given a SDSS $H$ for which the digital controller synthesis problem of Definition \ref{defn:synth} is feasible for a given $\vartheta\in(0,1)$
%with a controller of degree at least $L$. Suppose Algorithm \ref{alg:main}, with parameters $L, \vartheta$ and $c\in (0,1)$,
%returns a controller $\mathbf{p}$ and an interval $[a,b]$ such that $a \geq \vartheta$. Then controller $\mathbf{p}$ satisfies
%the Monte Carlo problem of Definition \ref{defn:synth_mc} with parameters $L, \vartheta$, and $c$.
Let $S$ be an SDSS for which the synthesis problem of Definition \ref{defn:synth} is feasible for a given $\vartheta\in(0,1)$
and controller degree $l\leq L$. Suppose that Algorithm \ref{alg:main}, with parameters $S, L, \vartheta$ and $c\in (0,1)$,
returns a controller $\mathbf{p}$ and an interval $[a,b]$ such that $a \geq \vartheta$. Then $\mathbf{p}$ is a solution of
the Monte Carlo problem of Definition \ref{defn:synth_mc} for $S, \vartheta$, and $c$.
\end{theorem}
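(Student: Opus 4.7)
The plan is to reduce the theorem to the soundness of the \textbf{verify} procedure, after tracking which $(\mathbf{p},[a,b])$ Algorithm~\ref{alg:main} actually returns under the hypothesis $a\ge\vartheta$.

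First I would analyse the control flow of Algorithm~\ref{alg:main}. The outer \textbf{Repeat} terminates only when $l>L$ or $a'\ge\vartheta$; the hypothesis $a\ge\vartheta$, together with the fact that $[a^*,b^*]$ is always $[0,0]$ until overwritten on line~\ref{line:main-result-update}, rules out termination by $l>L$ with $a^*=0$, so the exit must be caused by $a'\ge\vartheta$ at the end of some inner iteration. Chasing the update on line~\ref{line:main-result-update}, the returned pair $(\mathbf{p}^*,[a^*,b^*])$ must coincide with the $(\mathbf{p},[a',b'])$ computed by the last call to \textbf{verify} on line~\ref{alg:verify}. Thus $[a,b]$ is the numerically validated, SMT-backed interval produced by ProbReach, not the approximate interval returned by \textbf{optimize}, and $\mathbf{p}$ is the candidate that was actually passed to ProbReach.

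Next I would invoke the correctness of \textbf{verify}. By construction on top of ProbReach with parameters $(\xi,c)$, the interval $[a,b]$ is a statistically and numerically sound confidence interval, at confidence level $c$, for the quantity $Q(\mathbf{p}):=\mathrm{Prob}_{\boldsymbol{\theta}_K}\{\bar X(\mathbf{p},\boldsymbol{\theta}_K)\ge\vartheta\}$ appearing in Definition~\ref{defn:synth_mc}. In other words, with probability at least $c$ over ProbReach's internal sampling the true value $Q(\mathbf{p})$ lies in $[a,b]$, and in particular $Q(\mathbf{p})\ge a$. Combining with the hypothesis $a\ge\vartheta$ yields the certificate $Q(\mathbf{p})\ge\vartheta$ at confidence $c$, which is exactly the condition defining membership in $\hat{C}(l)$ in Definition~\ref{defn:synth_mc}; hence $\mathbf{p}$ is a solution of the Monte Carlo synthesis problem for $S,\vartheta,c$.

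The main subtlety I anticipate is the bookkeeping between the deterministic observable $a\ge\vartheta$ and the probabilistic guarantee provided by ProbReach: the implication $Q(\mathbf{p})\ge\vartheta$ holds only with confidence $c$, not with certainty, since reasoning about the nonlinear ODE with stochastic inputs is undecidable in general and the statistical contract of ProbReach is one-sided. I would make this precise by formalising membership in $\hat{C}(l)$ via a one-sided lower-bound certificate valid at confidence $c$, which is exactly the soundness statement for ProbReach proved in \cite{ProbReach,DBLP:conf/hvc/ShmarovZ16}. Feasibility of Definition~\ref{defn:synth} is used only to ensure that termination by $a'\ge\vartheta$ is in principle reachable (so the hypothesis of the theorem is consistent); no further machinery from the earlier stability analysis of Section~\ref{sec:digi_ctrl} is required here.
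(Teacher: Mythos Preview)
Your approach is the paper's: the paper offers only a one-paragraph sketch that reduces the claim to the observation that the returned interval is produced by \textbf{verify} and therefore carries confidence $c$, and you flesh out the control-flow bookkeeping behind that reduction. Two small inaccuracies worth cleaning up: (i) the returned $(\mathbf p^*,[a^*,b^*])$ need not come from the \emph{last} call to \textbf{verify}, since the midpoint-based update on line~\ref{line:main-result-update} may retain an earlier candidate---what matters, and what still holds, is that $[a^*,b^*]$ is \emph{some} \textbf{verify}-produced interval for the corresponding $\mathbf p^*$; (ii) the defining inequality of $\hat C(l)$ in Definition~\ref{defn:synth_mc} is $Q(\mathbf p)=\mathrm{Prob}_{\boldsymbol\theta_K}\{\bar X(\mathbf p,\boldsymbol\theta_K)\ge\vartheta\}\ge c$, not $Q(\mathbf p)\ge\vartheta$, so your last implication should target $c$ rather than $\vartheta$---the paper's sketch is no more explicit on this step than you are.
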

\begin{proof}[Proof (sketch)]
It suffices to note that Algorithm \ref{alg:main} terminates either by finding
a controller of minimal degree whose safety probability is at least $\vartheta$, with confidence $c$, or by finding
a controller of degree $l$ with the best confidence interval produced by {\bf verify}.
By hypothesis a controller of degree $l$ that satisfies $\phi$ with probability larger than $\vartheta$ exists, and
Algorithm \ref{alg:main} returns an interval whose LHS is larger than $\vartheta$. Therefore, by {\bf verify},
we know that this interval has confidence $c$.
\end{proof}
We remark that Algorithm \ref{alg:main} can output sub-optimal controllers: this is unavoidable when using stochastic
optimization methods such as cross-entropy. However if the controller design problem is feasible, multiple restarts of
Algorithm \ref{alg:main} will eventually find the minimal controller with probability 1.

\section{Case studies and evaluation}\label{sec:evaluation}
We evaluate our approach on three case studies: a model of insulin control for Type 1 diabetes (T1D)~\cite{hovorka2011closed}, also known as the artificial pancreas (AP),  a model of a powertrain control system~\cite{jin2014powertrain}, and a quadruple-tank process. For all case studies we use the following input parameters
for Algorithm \ref{alg:main}: $\xi = 0.05$, $c = 0.99$ and $\alpha=0.5$. The experiments were performed on a 32-core Intel 2.90GHz system running Ubuntu.
%for Algorithm \ref{alg:main}: $\xi = 0.05$, $c = 0.99$ and $\alpha=0.5$. The experiments were performed on a 32-core Intel Xeon E5-2690 2.90GHz system running Linux Ubuntu 16.04.4 LTS.

\paragraph{\bf{Digital PID controllers}.}
While our algorithm can synthesize any digital controller as per Definition \ref{defn:digital_controller}, we here exemplify
its use via proportional-integral-derivative (PID) controllers, one of the most popular control techniques. 
%as they are widespread due to their
%limited number of parameters to be tuned.
A PID controller output is the weighted sum of three terms: the error itself weighted with $K_P$, its rate of change weighted with $K_D$, and accumulated error weighted with $K_I$.
%They have only three parameters $K_P,K_I,K_D$,
The input/output equation of a digital PID controller is
\begin{align}
u(k) & = u(k-1) + K_P \left[e(k)-e(k-1)\right]\nonumber\\
& + K_I e(k) + K_D[e(k)-2e(k-1)+e(k-2)]\ .
\label{eq:PID_time}
\end{align}
Essentially, the controller needs to store the previous value of the input and the previous two values of the error. In the following case studies, we focus on the synthesis of controllers in the PID form, hence we consider a maximum degree $L=2$.

\subsection{Artificial pancreas}
The AP is a system for the automated delivery of insulin therapy that is required to keep blood glucose (BG) levels of diabetic patients within safe ranges, typically between 4-11 mmol/L. A so-called continuous glucose monitor (CGM) sends BG measurements to a control algorithm that computes the adequate insulin input. PID control is one of the main techniques~\cite{steil2011effect}, and is also found in commercial devices~\cite{kanderian2014apparatus}.

%Meals are the major disturbance in insulin control, making full closed-loop control still challenging. Commercial devices indeed require meal announcements from the patient to deliver the so-called bolus insulin, i.e., a single high dose to cover the meal. However, this process is error-prone and can cause dangerous wrong dosing. To this purpose, robust control methods have been proposed to design controllers that minimize the impact of disturbances (see e.g.~\cite{parker2000robust,szalay2014linear,paolettiCMSB17}), with the potential to enable full closed-loop control and eliminate manual input from the patient.
Meals are the major disturbance in insulin control, which make full closed-loop control challenging. Our approach is therefore well suited to solve this problem because it can synthesize controllers attaining arbitrary safety probability by minimizing the impact of such disturbances.
%Commercial devices indeed require meal announcements from the patient to deliver the so-called bolus insulin, i.e., a single high dose to cover the meal. However, this process is error-prone and can cause dangerous wrong dosing. To this purpose, robust control methods have been proposed to design controllers that minimize the impact of disturbances (see e.g.~\cite{parker2000robust,szalay2014linear,paolettiCMSB17}), with the potential to enable full closed-loop control and eliminate manual input from the patient.
%In~\cite{ShmarovPBLSZ17} the authors applied their method for SMT-based synthesis of continuous-time PID controllers to a model of the AP. However, continuous-time sensing and control is an impracticable assumption as the AP system has a sampling period of minutes. In contrast, in this section we apply our synthesis method on a realistic model of the AP, improving on~\cite{ShmarovPBLSZ17} in two main points: 1) we employ a sampled-data version of the AP model which adequately captures the update frequency of the physical AP system; 2) we define an accurate model of the CGM, modeling measurement noise and the glucose transport dynamics from the blood to the interstitial fluid compartment, which is where CGM measurements are taken. In contrast, in~\cite{ShmarovPBLSZ17} the authors assume that the controller can directly access BG: this is unrealistic and ignores the fact that, due to the transport dynamics, after a disturbance glucose appearance in the SG is delayed (typically of 10 to 20 minutes) with respect to the BG.
To model insulin and glucose dynamics, we employ the nonlinear model of Hovorka {\em et al}.~\cite{Hovorka04}, considered as one of the most faithful models. The plant has nine state variables describing insulin and glucose concentration in different physiological compartments.
We evaluate the system for a time bound of 24 hours.
%\Sadegh{(See Appendix~\ref{app:AP} for the full set of equations and parameters.)} We evaluate the system for a time bound of 24 hours.

In our SDSS model, we consider three meals (respectively represent breakfast, lunch and dinner) with random timing and random amount, expressed by the following normally-distributed parameters: the amount of carbohydrates (CHO) of each meal in grams,
%$D_{G_0}\sim\mathcal{N}(40,10)$, $D_{G_1}\sim\mathcal{N}(90,10)$ and $D_{G_2}\sim\mathcal{N}(70,10)$,
$D_{G_0}\sim\mathcal{N}(50,100)$, $D_{G_1}\sim\mathcal{N}(70,100)$ and $D_{G_2}\sim\mathcal{N}(60,100)$,
and the waiting times between meals,
%$T_1 \sim \mathcal{N}(300,10)$ and $T_2 \sim \mathcal{N}(300,10)$.
$T_1 \sim \mathcal{N}(300,100)$ and $T_2 \sim \mathcal{N}(300,100)$.
The corresponding disturbance input is given by:
$$\mathbf{d}(t) = \{ D_{G_0} \text{ if } t = 0; \ D_{G_1} \text{ if } t = T_1; \ D_{G_2} \text{ if } t = T_2; \ 0 \text{ otherwise} \}.$$

% \[ \mathbf{d}(t) =
% \begin{cases}
% % D_{G_0}/10 & \text{if } t \in [0,10]\\
% % D_{G_1}/10 & \text{if } t \in [T_1,T_1+10]\\
% % D_{G_2}/10 & \text{if } t \in [T_1+T_2,T_1+T_2+10]\\
% D_{G_0} & \text{if } t = 0\\
% D_{G_1} & \text{if } t = T_1\\
% D_{G_2} & \text{if } t = T_2\\
% 0 & \text{otherwise}
% \end{cases}
% \]
% meaning that every meal lasts for 10 minutes and CHO are distributed uniformly during meal time.

% The AP SDSS has one mode per meal and mode switching is triggered by the random meal timing as defined by the following reset function:
% \begin{align}
% r(\mathbf{x},q_0) = & \ (\mathbf{x},q_1), \ \forall \mathbf{x} \text{ s.t. }t = T_1\\
% r(\mathbf{x},q_1) = & \ (\mathbf{x},q_2), \ \forall \mathbf{x} \text{ s.t. }t = T_1+ T_2
% \end{align}
% meaning that we switch from the initial mode $q_0$ to $q_1$ at time $T_1$, and from $q_1$ to $q_2$ at time $T_1+ T_2$, independently from the plant state. A meal starts whenever a mode is entered, such that the CHO amount for mode $q_i$ is given by the random parameter $D_{G_i}$.

The system output $y(t)$ is the CGM measurement (performed every 5 minutes), given by the equation $y(t) = C(t) + \eta(t)$, where $C$ is the state variable for interstitial glucose and $\eta(t)$ is white Gaussian sensor noise with standard deviation $0.25$.

The control input $u(t)$ is the insulin infusion rate computed by the PID controller. The tracking error is defined as $e(t) = r(t) - y(t)$ with the constant reference signal $r(t) = 6.11$ mmol/L.
The total infusion rate is given by $u(t) + u_b$ where $u_b$ ($\approx 0.05548$) is the basal insulin, i.e., a low and continuous dose to regulate glucose outside meals. The value of $u_b$ is chosen to guarantee a steady-state BG value equals to $r(t)$ in absence of meals. This steady state is used as the initial state of the system.

\paragraph{Safety property} Insulin control seeks to prevent \textit{hyperglycemia} (BG above 11 mmol/L) and \textit{hypoglycemia} (BG below 4 mmol/L). Hypoglycemia happens when the controller overshoots its dose and has more severe health effects than hyperglycemia, which is tolerated to a small extent after meals. For this reason we consider a safe BG range of $[4,16]$ mmol/L, which strictly avoids hypoglycemia and allows for some post-meal hyperglycemia tolerance. In addition, we want that the glucose level stays close to the reference signal towards the end of the 24 hours (1440 minutes). Our invariant is given by:
$$G \in [4,16] \wedge (t \in [1410, 1440] \rightarrow G \in [r(t)-0.25,r(t)+0.25]),$$
where $G$ is the state variable for the BG concentration.

In the synthesis algorithm, we use a probability threshold of $\vartheta=0.95$ (we want to satisfy the above invariant with probability not below 95\%), confidence $c=0.99$, confidence interval size $\xi = 0.05$, and $\alpha = 0.5$.

% Kd:[0,0]; Ki:[0,0]; Kp:[-0.005006059620156763,-0.005006059620156763];    |   [0.938172804532578,0.9881728045325779]
%Kd:[0,0]; Ki:[-2.17920831129827e-07,-2.17920831129827e-07]; Kp:[-0.005400135318127839,-0.005400135318127839];    |   [0.9390387275242048,0.9890387275242047]
% Kd:[-0.2002109792057265,-0.2002109792057265]; Ki:[-1.879872194692378e-07,-1.879872194692378e-07]; [-0.005715566273157626,-0.005715566273157626];    |   [0.9424220963172806,0.9924220963172805]

\paragraph{Synthesis results} Table~\ref{tbl:PID_AP} shows the PID controllers synthesized at each iteration of the algorithm. The domain of controller parameters was chosen as follows: $K_p \in [-10^{-2}, 10^{-3}]$,
$K_i \in [-10^{-5}, 10^{-6}]$ and $K_d \in [-1, 10^{-1}]$. Even though none of the synthesized controllers achieves the probability threshold $\vartheta = 0.95$, the degree-2 controller (PID) is very close to satisfying the property, with a $99\%$-confidence interval of $[0.94242,0.99242]$. 

% This property is particularly challenging because it requires not just rejecting large meal disturbances to constantly keep BG within the safe range, but also satisfying the terminal constraint which imposes very strict bounds on the final BG value ($G \in [r(t)-0.25,r(t)+0.25]$). The degree-1 controller (PI) has, perhaps surprisingly, worse performance than the degree-0 controller (P) that has fewer degrees of freedom. This is explained by
% the approximate nature of the cross entropy algorithm, which can get stuck in local maxima. In particular, cross entropy could not find a degree-1 controller with $K_i=0$ (which would have had performance as good as the degree-0 one) because the algorithm is not good at sampling values at the boundaries of the search space.
%\Sadegh{Couldn't we just expand the interval to have zero as mid-point?}
%{\color{blue} This could happen due to the fact that 0 belongs to the boundary of the domain of $K_i$,
%and the optimisation algorithm is not very good at sampling values near the domain
%bounds. As a result, Algorithm \ref{alg:main} could not find a controller which provides the same safety guarantees as the one found on the previous iteration.}

\begin{table}
\centering
\begin{footnotesize}
\setlength{\tabcolsep}{0.5em}
\begin{tabular}{c|cccc|ccc}
%header
$l$ & $K_P$ \tiny{$\times 10^3$} & $K_I$ \tiny{$\times 10^7$} & $K_D$ \tiny{$\times 10$} & $[a^*, b^*]$ & $\#_o(\#^0_o)$ & $\#_c(\#_{un})$ & CPU(opt)\\
\hline
0 & -5.006 & - & - & [0.938,0.988] & 8(1) & 165(0) & 1130(586)\\
1 & -5.4 & -2.179 & - & [0.939,0.989] & 8(8) & 217(0) & 1534(873)\\
2 & -5.716 & -1.88 & -2.002 & [0.942,0.992] & 8(8) & 301(0) & 2100(1312)\\
% 0 & -5.006 & - & - & [0.93817,0.98817] & 8(1) & 165(0) & 1130(586)\\
% 1 & -5.4 & -2.179 & - & [0.93904,0.98904] & 8(8) & 217(0) & 1534(873)\\
% 2 & -5.716 & -1.88 & -2.002 & [0.94242,0.99242] & 8(8) & 301(0) & 2100(1312)\\
\hline
\end{tabular}
\end{footnotesize}
\caption{Controller synthesis for the artificial pancreas system. $l$ -- controller degree,
$K_P$, $K_I$, $K_D$ -- controller gains, $[a^*,b^*]$ -- confidence interval for safety probability (with $c=0.99$), $\#_o(\#^0_o)$ -- number of points used by the non-verified ODE solver at the end (beginning) of iteration,
$\#_c(\#_{un})$ -- number of candidates (unstable) sampled by the optimization algorithm,
CPU(opt) -- total (only {\bf optimize} procedure) runtime in minutes.}
\vspace*{-.6cm}
\label{tbl:PID_AP}
\end{table}

To better understand the performance of the controllers, we analyze their behavior on 1,000 Monte Carlo executions of the system. Results, reported in Figure~\ref{fig:sims_AP}, evidence that hyper- and hypo-glycemia episodes are never sustained. 

% \textcolor{red}{
% For instance, the degree-2 controller spends, on the average, only 1.15\% of the time in hyper-glycemia and 0.003\% of the time in hypo-glycemia. The comparably low safety probability is due to the fact that many executions (35.1\% in this experiment) violate the BG upper bound, albeit for a very short time. The degree-0 controller has a similar behavior but with more frequent hyperglycemia episode and no hypoglycemia. The opposite situation (prevalence of hypo-glycemia) is observed for the degree-2 controller.
% The synthesized controllers perform well in relation to the terminal constraint, which is violated with probability ranging from $0$\% of the degree-2 controller to $7.5$\% of the degree-0 controller.}

\begin{figure}
\centering
\vspace*{-.4cm}
\subfloat[$l$=0]{\includegraphics[width=.33\columnwidth]{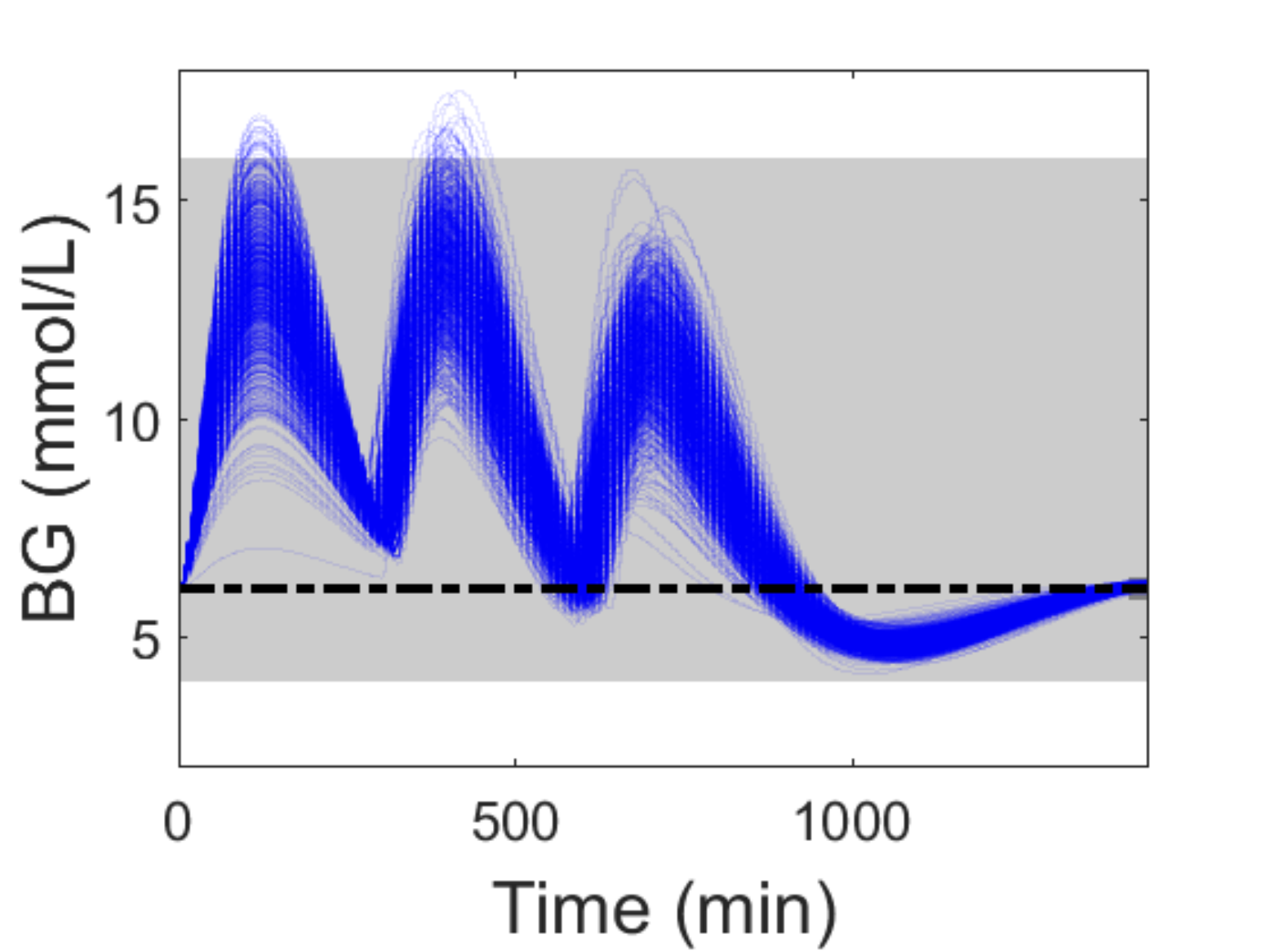}}
\subfloat[$l$=1]{\includegraphics[width=.33\columnwidth]{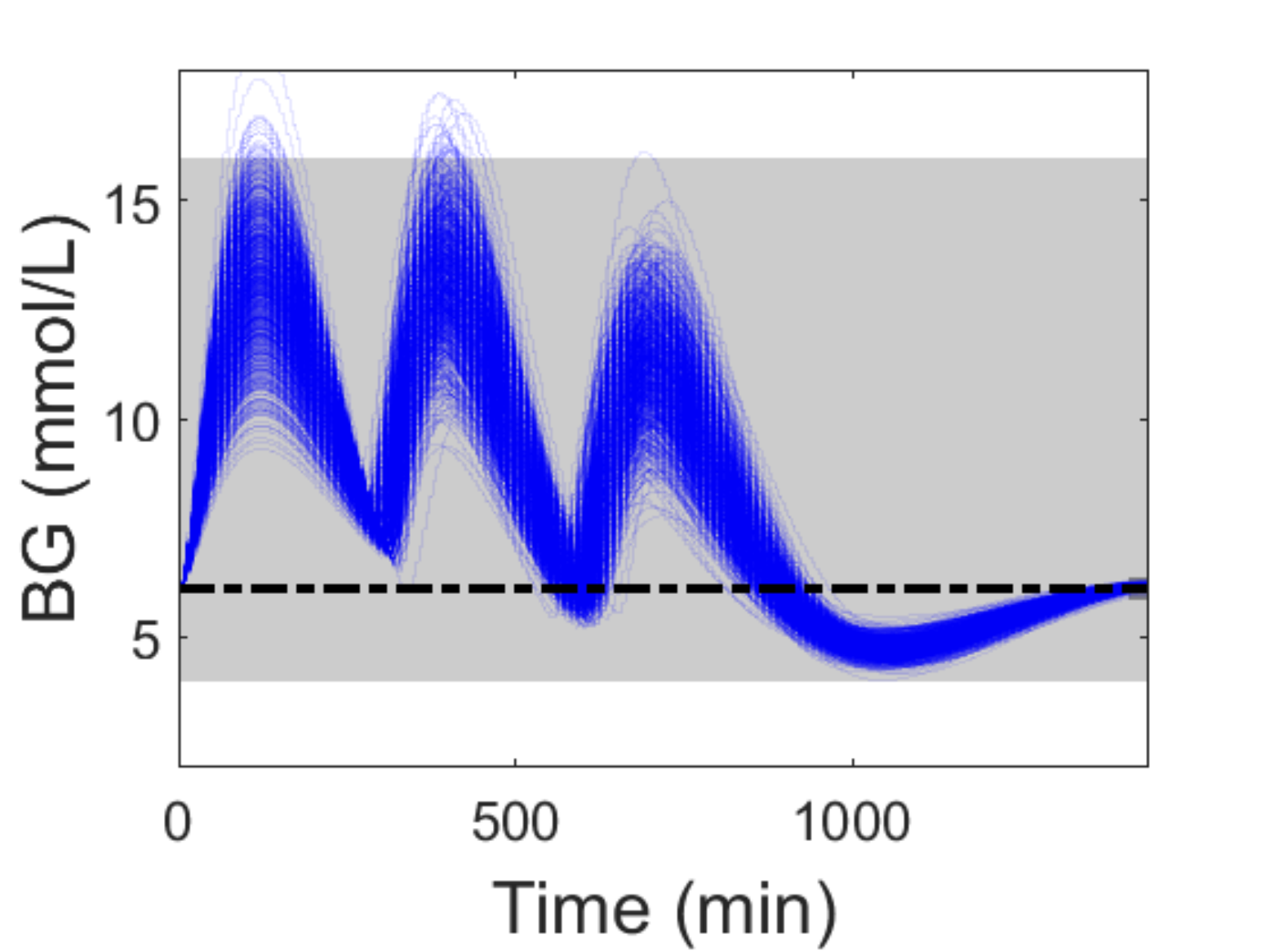}}
\subfloat[$l$=2]{\includegraphics[width=.33\columnwidth]{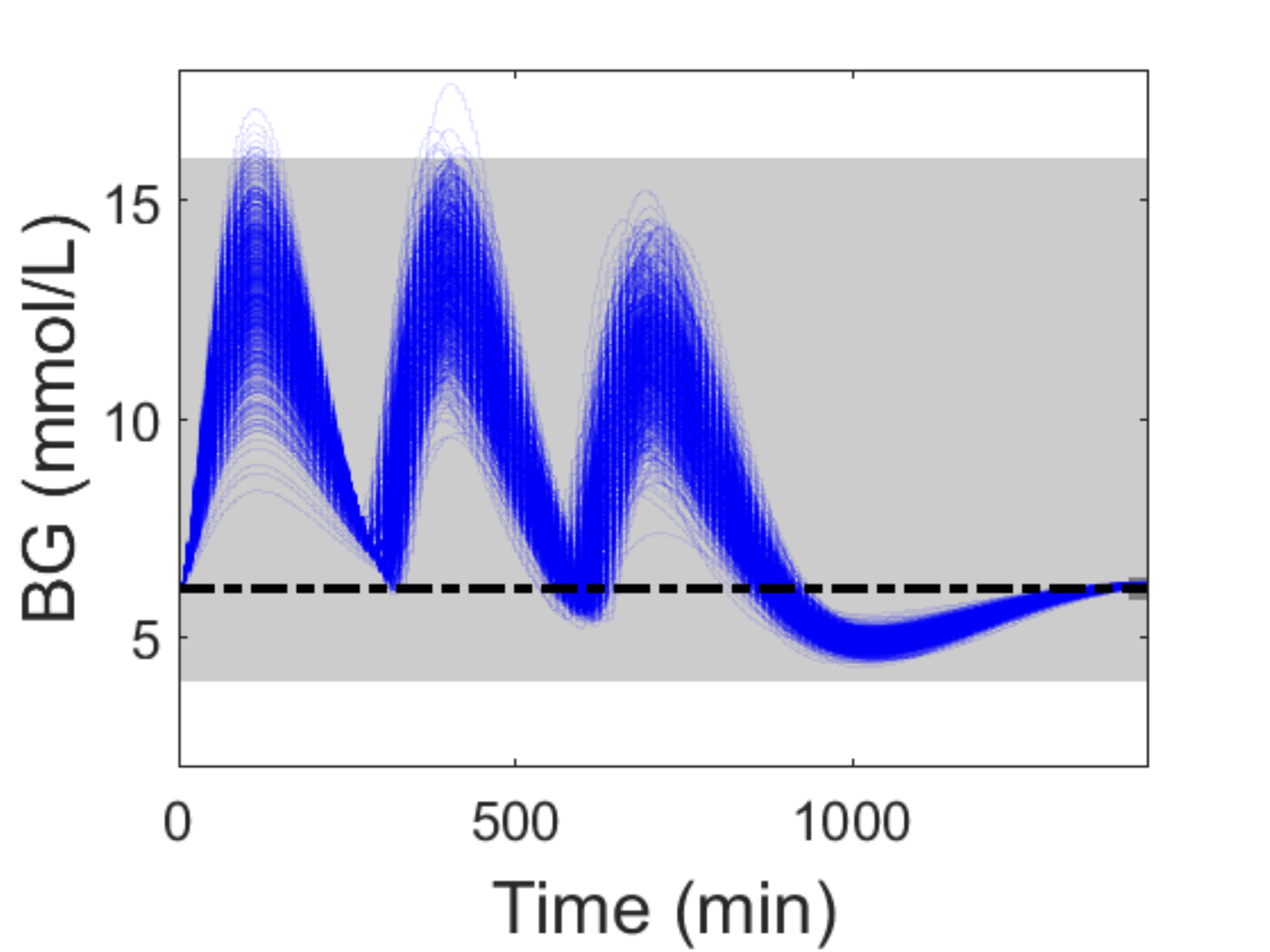}}\\[3pt]
%\begin{footnotesize}
%\begin{tabular}{cccccc}
%& $t_{>16}$ & $\%_{>16}$ & $t_{<4}$ & $\%_{<4}$ & $\%_{\neg \mathsf{terminal}}$\\ \hline
%(a) & 2.38\% & 51.7\%& 0\%& 0\%& 7.5\%\\
%(b) & 0.37\%& 9.9\% & 5.61\%& 64.5\%& 3\%\\
%(c) & 1.15\%& 35.1\%& 0.003\%& 0.1\%& 0\%\\ \hline
%\end{tabular}
%\end{footnotesize}
%\caption{Evaluation of synthesized controllers (degrees $l=0,1,2$) on 1,000 simulations of the AP system. Blue lines: BG profiles; grey areas: healthy BG ranges ($G \in [4,16]$); dashed black lines: reference BG. $t_{>16}$ ($t_{<4}$):  relative time spent with BG above (below) the safe range; $\%_{>16}$ ($\%_{<4}$): proportion of traces with BG above (below) the safe range; $\%_{\neg \mathsf{terminal}}$: proportion of traces violating the terminal constraint. \Nicola{to save space, we can remove the table and report results inline in the text}}
\vspace*{-.3cm}
\caption{Evaluation of synthesized controllers (degrees $l=0,1,2$) on 1,000 simulations of the AP system. Blue lines: BG profiles; grey areas: healthy BG ranges ($G \in [4,16]$); dashed black lines: reference BG.}
\vspace*{-.4cm}
\label{fig:sims_AP}
\end{figure}

% We denote with $C_0$ the basal controller that switches off the PID controller and applies only $u_b$ (i.e., $K_p$, $K_i$ and $K_d$ are equal to 0).

% The pump administers both \textit{basal insulin}, a low and continuous dose that covers insulin needs outside meals, and \textit{bolus insulin}, a single high dose for covering meals.

% While some temporary, postprandial hyperglycemia is typically admissible, hypoglycemia leads to severe health consequences, and thus, it should be avoided as much as possible.

% in order to avoid \textit{hyperglycemia} (BG above the healthy range) and \textit{hypoglycemia} (BG below the healthy range). While some temporary, postprandial hyperglycemia is typically admissible, hypoglycemia leads to severe health consequences, and thus, it should be avoided as much as possible. This is a crucial safety requirement, which we will incorporate in our synthesis properties.

\subsection{Powertrain system}
We consider the automotive air-fuel control system adapted from the powertrain control benchmark in~\cite{jin2014powertrain}. The plant model consists of a system of three nonlinear ODEs describing the dynamics of the engine in relation to the throttle air dynamics, intake manifold and air-fuel path.% \Sadegh{(see Appendix~\ref{app:af_model} for the full model)}.

The system has two exogenous inputs, captured by the disturbance vector $\mathbf{d}(t) = [\omega(t) \ \theta_{in}(t)]^T$, where $\omega$ (rad/s) is the engine speed
%($\omega \sim \mathcal{N}(105,2)$),
($\omega \sim \mathcal{N}(105,4)$),
and $\theta_{in}$ (degrees) is the throttle angle. $\theta_{in}(t)$ is defined as a pulse train wave with random amplitude
%$a \sim \mathcal{N}(30.6,5)$
$a \sim \mathcal{N}(30.6,25)$
and period $\zeta = 4$:
%$$\theta_{in}(t) = \{a \text{ if } t\in [0,\zeta/2); \ 8.8 \text{ if } t\in [\zeta/2,\zeta]\}.$$
$$\theta_{in}(t) = a \,I \{t\in [0,\zeta/2)\} + 8.8 \,I\{t\in [\zeta/2,\zeta]\}.$$
% While in the original benchmark $\omega(t)$ and $\theta_{in}(t)$ are deterministic, we are interested in designing controllers that are robust to random disturbances. To this purpose, we define the throttle angle $\theta_{in}(t)$ as a pulse train wave derived from the random variables {\color{red}$d \sim ???$} (pulse delay), {\color{red}$a \sim ???$} (amplitude), and {\color{red}$\zeta \sim ???$} (period) as follows:
% \[ \theta_{in}(t) =
% \begin{cases}
% 8.8 & \text{if } t \in [0,d)\\
% 8.8+ (a-8.8)(t-d)/\epsilon & \text{if } t \in [d,d+\epsilon)\\
% a  & \text{if } t \in [d+\epsilon,d+\zeta/2)\\
% a  + (8.8-a)(t-d-\zeta/2)/\epsilon & \text{if } t \in [d+\zeta/2,d+\zeta/2+\epsilon)\\
% 8.8 & \text{if } t \in [d+\zeta/2+\epsilon,d+\zeta]
% \end{cases}
% \]
% where {\color{red}$\epsilon=???$} is the time needed to (linearly) switch from low to low values (and viceversa) of $\theta_{in}$.

% The plant outputs are $y(t) =[\tilde{\lambda}(t) \ \dot{m}_{af}(t)]^T$, where $\tilde{\lambda}(t) = \lambda(t) + \eta(t)$ is the noisy measurement of the air/fuel ratio $\lambda(t)$, {\color{red} with $\eta(t) \sim \mathcal{N}(0,???)$}, and $\dot{m}_{af}$ is the inlet air mass flow rate (non-noisy).

The noisy plant output is $y(t) =  \lambda(t) + \eta(t)$, where $\lambda(t)$ is the air/fuel ratio, and $\eta(t) \sim \mathcal{N}(0,0.0625)$.
%{\color{red}(all normal random variables in the experiments section are defined as $\mathcal{N}(\mu,\sigma)$ rather than $\mathcal{N}(\mu,\sigma^2)$. We will change later)}.

The engine is controlled by a PID controller that seeks to maintain a constant air/fuel ratio equals to the stoichiometric value $\bar{\lambda} = 14.7$, that is when the engine performs optimally. The tracking error is thus given by $e(t) = y(t) - \bar{\lambda}$. The control signal $u(t)$ determines the amount of fuel entering the system. % \Sadegh{(see Appendix~\ref{app:af_model})}. The sampling period is $\tau=0.1$.

\paragraph{Safety property} We consider the following invariant
\small
%$$(t \in [0,\zeta/8]) \rightarrow |\mu(t)| < 1) \wedge (t \in [\zeta/8,\zeta/2]) \rightarrow |\mu(t)| < 0.05) \wedge$$
%$$(t \in [\zeta/2,5\zeta/8]) \rightarrow |\mu(t)| < 1) \wedge (t \in [5\zeta/8,\zeta]) \rightarrow |\mu(t)| < 0.05),$$
$$|\mu(t)| < 1 \wedge (t \in [\zeta/8,\zeta/2] \cup [5\zeta/8,\zeta]) \rightarrow |\mu(t)| < 0.05)$$
\normalsize
where $\mu(t) = (\lambda(t) - \bar{\lambda})/\bar{\lambda}$ is the relative error from the setpoint.
% , and $t_s= 0.2$ is the settling time. {\color{red} the property depends on $\zeta$ that is random (is it?)}
%
The first conjunct states that the air/fuel ratio should constantly be within $\pm 100\%$ of the ideal ratio $\bar{\lambda}$.
The second conjunct states that whenever the input throttle angle $\theta_{in}$ rises (at time $t=0$) or
falls ($t=\zeta/2$), the plant should settle within time $\zeta/8$ and remain in the settling region ($\pm 5\%$ around $\bar{\lambda}$) until the next rise or fall (happening after time $\zeta/2$). We set the probability threshold to $\vartheta=0.96$. 
%(\ie, the safety property my hold with the probability of at least \textcolor{blue}{96\%}).

\paragraph{Synthesis results} Table~\ref{tbl:PID_PT} shows the PID controllers synthesized at each iteration of the algorithm. The domain of controller parameters was chosen as follows: $K_p \in [-0.1, 0.5]$,
$K_i \in [-0.05, 0.2]$ and $K_d \in [-0.05, 0.05]$. With our algorithm, we could synthesize a degree-2 controller (PID) satisfying the threshold. The optimal degree-1 controller has similar performance (both yield confidence intervals with RHS equals to 1), albeit below the threshold. 

\begin{table}
\centering
\begin{footnotesize}
\setlength{\tabcolsep}{0.5em}
\begin{tabular}{c|cccc|ccc}
%header
$l$ & $K_P$ & $K_I$ & $K_D\times 10^3$ & $[a^*, b^*]$ & $\#_o(\#^0_o)$ & $\#_c(\#_{un})$ & CPU(opt)\\
\hline
0 & 0.2713 & - & - & [0.783,0.834] & 128(64) & 74(7) & 1068(944)\\
1 & 0.2004 & 0.0537 & - & [0.954,1] & 128(128) & 134(22) & 1838(1690)\\
2 & 0.2082 & 0.0759 & -4.9551 & [0.963,1] & 128(128) & 214(72) & 2337(2165)\\
% 0 & 0.2713 & - & - & [0.78376,0.83435] & 128(64) & 74(7) & 1068(944)\\
% 1 & 0.2004 & 0.0537 & - & [0.95487,1] & 128(128) & 134(22) & 1838(1690)\\
% 2 & 0.2082 & 0.0759 & -4.9551 & [0.96344,1] & 128(128) & 214(72) & 2337(2165)\\
\hline
\end{tabular}
\end{footnotesize}
\caption{Controller synthesis for the fuel control system. See caption of Table \ref{tbl:PID_AP}.}
\vspace*{-1cm}
\label{tbl:PID_PT}
\end{table}

Compared to the AP case study, we observe that the powertrain model requires generating (and verifying) fewer candidate parameters. 
At the same time the dynamics of the powertrain system appear more challenging to control as the model requires more ODE integration steps (see column $\#_o(\#^0_o)$ of Tables~\ref{tbl:PID_PT} and~\ref{tbl:PID_AP}).

\subsection{Quadruple-tank process}
We consider a quadruple-tank process adapted from~\cite{QT00}, which consists of four interconnected water tanks. The process is illustrated in Figure~\ref{fig:quad-tank}. This model is an example of a multiple-input and multiple-output (MIMO) system with \emph{multivariable right half-plane zeros} \cite{glad2000control} (such zeros bring performance limitations in control problems). We extended the deterministic model of~\cite{QT00} to include uncertainties in the valve settings and random disturbances in the process that remove water from the tanks. 
%We applied our approach to this process by including uncertainties in the valve settings and included disturbances in the process as removing water from the tanks. 

The process is controlled in a decentralized fashion, by which two digital controllers are designed for the input-output pairs $(u_1, y_1)$ and $(u_2, y_2)$, where $u_1$ and $u_2$ are the input voltages
for the pumps, and $y_1$ and $y_2$ are the water level measurements obtained as $y_1 = 0.5 \cdot h_1$ and $y_2 = 0.5 \cdot h_2$, where $h_1$ and $h_2$
are the water levels in tanks $1$ and $2$, respectively. In this case study we assume that the pumps can only add water to the tanks (and cannot pump it out).
%water can only be added to (and cannot be pumped out of) the tanks by the pumps.

We consider a scenario where at time 0 and then twice after every minute 
we remove a random amount of water (which is model through
reducing the corresponding water levels by a random value $\sim U(0,3)$) from tanks 1 and 2.  Every time such a disturbance happens, the valves parameters are randomly reset to 
$\gamma_1 \sim N(0.7, 0.223)$ and $\gamma_2 \sim N(0.6, 0.223)$. The system is subject to a measurement noise modeled as a white Gaussian noise with variance 0.33.

\paragraph{Safety property}
After each disturbance, we require that the system reaches the desired water levels in tanks 1 and 2 (within 1 centimeter above or below the corresponding set points $r_1 = 12.4$ and $r_2 = 12.7$) within 5 seconds, and that the water levels stay close to the setpoints for the remaining 55 seconds, before the next disturbance occurs. Also, all four water levels $h_1, h_2, h_3, h_4$ must always stay in the interval $[0,20]$ and the input voltages $u_1, u_2$ for both pumps must be in the range $[0,24]$.

%\iffalse
%%The objective of the controller synthesis is to keep the water levels within the tanks
%We consider a scenario where at time 0 and then twice after every minute 
%we remove a random amount of water (which is model through
%reducing the corresponding water levels by a random value $\sim U(0,3)$) from tanks 1 and 2. Then, we allow 5 seconds for the system to 
%reach the desired water levels in tanks 1 and 2 (within 1 centimeter above or below the corresponding 
%set points $r_1 = 12.4$ and $r_2 = 12.7$). We require the water levels stay close to the setpoints for the remaining 55 seconds, before the next disturbance occurs.
%%where we require it to stay for the remaining 55 seconds before the next disturbance arrives.
%%Also at the time when an external disturbance is introduced the valves parameters are reset to 
%When external disturbances happen,  the valves parameters are reset to 
%$\gamma_1 \sim N(0.7, 0.223)$ and $\gamma_2 \sim N(0.6, 0.223)$.
%The system is subject to a measurement noise modeled as a Gaussian random variable with variance 0.33.
%Also at all time (3 minutes overall) all four water levels $h_1, h_2, h_3, h_4$ must stay in the interval $[0,20]$, 
%and the input voltages $u_1, u_2$ for both pumps must be in the range $[0,24]$.
%\fi

\begin{figure}
\centering
\vspace*{-.4cm}
\includegraphics[width=0.5\columnwidth]{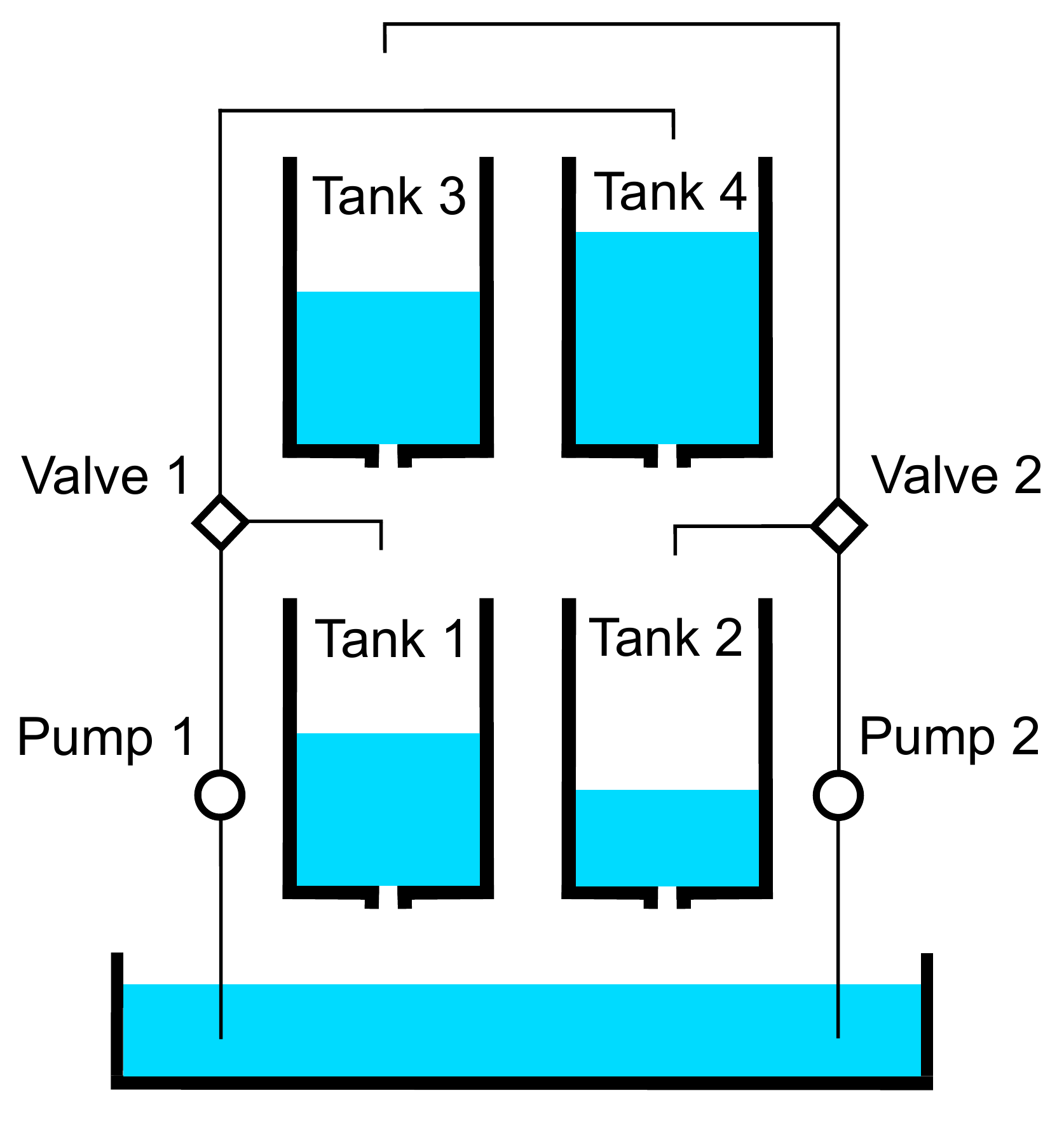}
\vspace*{-.3cm}
\caption{The diagram of the quadruple-tank model.}
\vspace*{-.4cm}
\label{fig:quad-tank}
\end{figure}

\paragraph{Synthesis results} The domain of controller parameters
was chosen as $K_{P_1} \in [-1, 20]$, $K_{I_1} \in [-1, 10]$, $K_{D_1} \in [-1,10]$, $K_{P_2} \in [-1, 20]$, $K_{I_2} \in [-1, 10]$, $K_{D_2} \in [-1, 10]$. The controller synthesis results are presented in Table~\ref{tbl:PID_QT}, which shows that we can obtain a confidence interval of up to $[0.94,0.99]$ for the safety probability by using two PI controllers (see third row of Table~\ref{tbl:PID_QT}). Note that the performance of the controller is not improved by including the derivative terms $K_{D_1},K_{D_2}$ (see last two rows). 
%This is due to the optimization algorithm which works based on finite number of samples.  
This is due to the optimization algorithm which works by sampling a finite number of controller parameters and thus, might fail to explore parameter regions with better safety probability. 
%optimization algorithm which works based on finite number of samples.  

\begin{table}
\centering
\begin{footnotesize}
\setlength{\tabcolsep}{0.5em}
\begin{tabular}{cccccc|ccc}
%header
$K_{P_1}$ & $K_{I_1}$ & $K_{D_1}$ & $K_{P_2}$ & $K_{I_2}$ & $K_{D_2}$ & $[a^*, b^*]$ & $\#_c(\#_{un})$ & CPU(opt)\\
\hline
10.916 & - & - & 13.085 & - & - & [0.85,0.90] & 58(1) & 164(30)\\
8.463 & 0.650 & - & 12.749 & - & - & [0.89,0.94] & 146(1) & 252(73)\\
6.555 & 1.233 & - & 10.057 & 1.359 & - & [0.94,0.99] & 251(1) & 370(113)\\
6.576 & 1.144 & 1.737 & 9.019 & 1.048 & - & [0.93,0.98] & 507(2)$^*$ & 717(246)\\
6.422 & 1.057 & -0.075 & 6.760 & 1.724 & 3.973 & [0.90,0.95] & 654(2)$^*$ & 917(346)\\
\hline
\end{tabular}
\end{footnotesize}
\caption{Controller synthesis for the quadruple-tank system.
$K_{P_1}$, $K_{I_1}$, $K_{D_1}$, $K_{P_2}$, $K_{I_2}$, $K_{D_2}$ -- controller gains, $[a^*,b^*]$ -- confidence interval for safety probability (with $c=0.99$), $\#_c(\#_{un})$ -- number of candidates (unstable) sampled by the optimization algorithm,
CPU(opt) -- total (only {\bf optimize} procedure) runtime in minutes, $^*$ -- the number of points in discretisation was increased from 1 to 2.}
\vspace*{-.6cm}
\label{tbl:PID_QT}
\end{table}

% Kd1:[-0.07477618929923491,-0.07477618929923491]; Kd2:[3.973312861854557,3.973312861854557]; Ki1:[1.056527633760122,1.056527633760122]; Ki2:[1.724433450573047,1.724433450573047]; Kp1:[6.422262137012806,6.422262137012806]; Kp2:[6.759930523497222,6.759930523497222];    |   [0.9051552106430156,0.9551552106430156]

%!TEX root = main.tex

% Ezio / Nicola
\section{Related Work}
\label{sec:related}

Although recent papers~\cite{AbateBCCCDKKP17,AbateBCCDKKP17cav,Duggirala015}
have addressed the synthesis of safe digital controllers for linear and 
deterministic systems, synthesis for the class of stochastic nonlinear systems that we consider has not 
yet been tackled.  
 
While the approach proposed in this paper can be applied in principle 
to any digital controller, our case studies focus on PID controllers.
% The synthesis of PID controllers for nonlinear 
%systems is a very active and important research field.
Several methods have been proposed for the synthesis of PID controller for nonlinear and stochastic plants~\cite{su2005design,fliess2013model,guo2005pid,he2011robust,duong2012robust,ShmarovPBLSZ17,TronciTAC2017}. However, none of these methods can provide safety guarantees beside the work of~\cite{ShmarovPBLSZ17} (discussed at the end of the section) .
%There has been a great effort to provide effective and efficient methods to synthesize PID controllers  for nonlinear and stochastic plants~\cite{su2005design,fliess2013model,guo2005pid,he2011robust,duong2012robust,ShmarovPBLSZ17,TronciTAC2017}, 
%but really few of them~\cite{ShmarovPBLSZ17} take into consideration 
%safety requirements that prevent the dynamics of the closed-loop 
%system to reach an unsafe region in the state space.

%For example, the work in~\cite{su2005design} 
%proposes a method of nonlinear PID control 
%where the controller gains are expressed as nonlinear 
%functions of the system state.  
%However, this approach can provide only performance guarantees for a restricted class of nonlinear systems. 
%% some performance 
%% guarantees for a restricted subclass of these nonlinear systems.
%The adaptive PID control discussed in~\cite{fliess2013model}
%supports
%%is a another suitable approach to adopt to 
%control of nonlinear plants with 
%partially known dynamics, 
%%because it provides a mechanism to 
%by estimating the unknown part via sampling of the plant output.  Also in 
%this case no safety requirements can be guaranteed at design time. 

We pursue a different direction with respect to the classical solutions 
proposed in the literature, by providing a framework for the efficient 
synthesis of discrete-time digital controller for nonlinear and stochastic 
plants that are safe and robust (with respect to a probabilistic 
reachability property) by construction.

%The control synthesis problem considered here
%requires to solve a parameter synthesis problem 
%over a closed-loop system (plant + controller) 
%modeled as a stochastic nonlinear system 
%under safety requirements. 
The control synthesis problem considered here
requires to solve a parameter synthesis problem 
over a closed-loop system modeled as a stochastic nonlinear system. 
In contrast with existing parameter 
synthesis techniques for stochastic 
 and continuous nonlinear systems~\cite{BartocciBNS15,BortolussiS15,HaghighiJKBGB15}, 
our approach, which targets specifically digital 
controllers, takes advantage of a 
computationally efficient stability check that rules 
out unstable controller candidates, thereby reducing the 
computational effort.  
%Although in this paper 
%we have employed the CE optimization 
%algorithm, this can be easily replaced with 
%other approaches such as \emph{Guassian processes}~\cite{BartocciBNS15,BortolussiS15} or
%\emph{particle swarm optimization}~\cite{BartocciGHB18,HaghighiJKBGB15} algorithms.

The problem of controller synthesis under safety requirements has
been investigated mostly for Model Predictive Control (MPC)~\cite{camacho2013model} %(also called receding horizon control) 
whose goal is to find the control input that optimizes the predicted performance of the closed-loop system up to a finite horizon. The work of~\cite{KaramanSF08,raman2014model,kim2017dynamic,WongpiromsarnTM12,pant2017smooth,LiNSXL17,SadighK16} 
consider safety requirements expressed as temporal logic formulas, 
and synthesize MPC controllers that optimize the robust satisfaction of 
the formula~\cite{donze2010robust} (i.e., a quantitative measure of satisfaction). 
MPC is an online method that requires solving at runtime often computationally 
expensive optimization problems. In contrast, our approach performs controller synthesis at design time.

The closest paper to our own is the work of~\cite{ShmarovPBLSZ17}, where the authors have 
recently proposed a method to synthesize continuous-time 
PID controllers for nonlinear stochastic 
plants such that the resulting closed-loop system satisfies
safety and performance 
requirements specified as bounded 
reachability properties.  
That method works under the assumptions that
the system can measure the output of the plant continuously 
and without sensing noise. However, this is not realistic in the majority 
of embedded systems whose operations 
are governed by a discrete-time clock and where sensor noise 
is unavoidable.
%Their assumption is 
%to be able to measure the output of the plant continuously. 
%However, this is not realistic in the majority 
%of embedded systems whose operations 
%are governed by a discrete-time clock. 
%and 
%where frequent sampling may cause an 
%early battery depletion.

% TBD
%\input{conclusion}
\section{Conclusions}
The synthesis of digital controllers for cyber-physical systems with nonlinear and stochastic dynamics is a challenging problem, and for such systems, no automated methods currently exist for deriving controllers with rigorous and quantitative safety guarantees. In this paper, we have presented a solution to this problem based on two key contributions: a method to check the candidate controllers for the sampled-data nonlinear system with respect to stability; and a two-stage synthesis algorithm that alternates between a fast candidate generation phase (based on Monte-Carlo sampling and non-validated ODE solving) and a verification phase where we derive numerically and statistically valid confidence intervals on the safety probability of the closed-loop system. With this method, we managed to synthesize controllers for three nonlinear systems (artificial pancreas, powertrain, and quadruple-tank process) characterized by large stochastic disturbances and sensing noise. 
As future work, we plan to extend our method to hybrid systems and controllers with fixed-point precision. 

%\bibliographystyle{ACM-Reference-Format}
%\bibliography{refs_short}
%%% -*-BibTeX-*-
%%% Do NOT edit. File created by BibTeX with style
%%% ACM-Reference-Format-Journals [18-Jan-2012].

%
%\clearpage
\newpage
\appendix
\section{Stability Requirement}\label{app:stability}

The following lemma provides a bound on exponent of a matrix and is is used in proving our main theorem.
\begin{lemma}
\label{lem:matrix_bound}
For any matrix $A$ and any $\epsilon>0$, there is a constant $C(\epsilon)$ such that
\begin{equation*}
\|e^{At}\|_2\le C(\epsilon)e^{(a+\epsilon)t},\quad \forall t\ge 0,
\end{equation*}
with $a = \max_i Re(\lambda_i(A))$.
\end{lemma}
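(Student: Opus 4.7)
The plan is to reduce the bound to the Jordan canonical form of $A$, where the matrix exponential has an explicit form as an exponential times a polynomial-in-$t$, and then to absorb the polynomial factor into the $e^{\epsilon t}$ term at the cost of enlarging the constant. So first I would write $A = PJP^{-1}$ for a nonsingular $P$ and a Jordan matrix $J = \mathrm{diag}(J_1,\dots,J_s)$ (real or complex Jordan form, either works), obtaining $e^{At} = P e^{Jt}P^{-1}$ and hence $\|e^{At}\|_2 \le \|P\|_2\,\|P^{-1}\|_2\,\|e^{Jt}\|_2$. Since $J$ is block diagonal, $\|e^{Jt}\|_2 \le \max_i \|e^{J_i t}\|_2$ up to a dimension-dependent factor, so it is enough to bound each Jordan-block exponential separately.

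Next I would analyse a single Jordan block $J_i = \lambda_i I_{m_i} + N_i$ with $N_i$ the standard nilpotent shift of size $m_i$. Because $\lambda_i I$ and $N_i$ commute, $e^{J_i t} = e^{\lambda_i t}\,e^{N_i t}$, and since $N_i^{m_i}=0$ the second factor is the finite sum $\sum_{k=0}^{m_i-1}\frac{t^k}{k!}N_i^k$, a matrix polynomial in $t$ of degree at most $n-1$. Taking norms gives
\begin{equation*}
\|e^{J_i t}\|_2 \le e^{\mathrm{Re}(\lambda_i)t}\,p_i(t) \le e^{at}\,p(t),
\end{equation*}
where $p(t)$ is a fixed polynomial of degree at most $n-1$ with nonnegative coefficients depending only on $A$, and we have used $\mathrm{Re}(\lambda_i)\le a$ for every eigenvalue.

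The final step is the polynomial-vs-exponential trick: for any $\epsilon>0$ the function $t\mapsto p(t)e^{-\epsilon t}$ is continuous on $[0,\infty)$ and tends to $0$ as $t\to\infty$, hence attains a finite maximum $M(\epsilon)$. Setting
\begin{equation*}
C(\epsilon) := \|P\|_2\,\|P^{-1}\|_2\,M(\epsilon)
\end{equation*}
(possibly times a harmless dimensional constant coming from the block-diagonal bound) yields $\|e^{At}\|_2 \le C(\epsilon) e^{(a+\epsilon)t}$ for all $t\ge 0$, as required.

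The only real subtlety, and therefore the main thing to handle carefully, is the dependence of $C(\epsilon)$ on the Jordan structure of $A$: the condition number $\|P\|_2\|P^{-1}\|_2$ can be large when $A$ is close to non-diagonalizable, and $M(\epsilon)\to\infty$ as $\epsilon\to 0^+$ when $A$ has defective eigenvalues on the line $\mathrm{Re}(z)=a$. Neither issue affects the statement, since we only need existence of \emph{some} constant for each fixed $\epsilon>0$, but it is worth noting explicitly that $C(\epsilon)$ need not be uniform in $\epsilon$. If one prefers to avoid Jordan form, the same argument can be carried out via the Schur triangularization $A = U(D+N)U^*$ with $U$ unitary, $D$ diagonal and $N$ strictly upper triangular (hence nilpotent), using $\|e^{(D+N)t}\|_2 \le e^{at}\|e^{Nt}\|_2$ after separating the commuting diagonal part — this has the aesthetic advantage of eliminating the $\|P\|_2\|P^{-1}\|_2$ factor in favour of unitary invariance of $\|\cdot\|_2$.
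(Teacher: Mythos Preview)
Your main argument via the Jordan form is correct and is essentially identical to the paper's proof: both pass to $A=PJP^{-1}$, bound each Jordan block as $e^{\lambda_i t}$ times a polynomial in $t$ coming from the nilpotent part, and then absorb the polynomial into $e^{\epsilon t}$ by boundedness of $p(t)e^{-\epsilon t}$ on $[0,\infty)$. Two small remarks: the block-diagonal bound $\|e^{Jt}\|_2=\max_i\|e^{J_it}\|_2$ holds exactly for the spectral norm, so no extra dimensional constant is needed; and your Schur-form aside does not work as written, since in the Schur decomposition $D$ and $N$ do \emph{not} commute in general, so one cannot split $e^{(D+N)t}$ as $e^{Dt}e^{Nt}$---stick with the Jordan argument.
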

\begin{proof}
We use the Jordan normal form $J$ of $A$ that satisfies $A = P^{-1} J P$ for some invertible matrix $P$:
\begin{align*}
\|e^{At}\|_2 & = \|P^{-1}e^{Jt}P\|_2\le \|P^{-1}\|_2\|P\|_2\|e^{Jt}\|_2\\
& \le \|P^{-1}\|_2\|P\|_2\max_i\|e^{J_it}\|_2,
\end{align*}
where $J_i$ is the $i^{\text{th}}$ block of $J$ with size $k_0\in\mathbb N$ and can be written as $J_i = \lambda_i I + N$. Matrix $N$ is a matrix of all zeros except identities on the first superdiagonal.
Since $N^m = 0$ for all $m> k_0$, we have the following for $J_i$:
\begin{align*}
e^{J_it} & = e^{(\lambda_i I+N)t} = e^{\lambda_i t}e^{Nt}\\
& = e^{(\lambda_i+\epsilon) t}e^{Nt} e^{-\epsilon t} = e^{(\lambda_i+\epsilon) t}\sum_{m=0}^{k_0}\frac{N^m}{m!}t^me^{-\epsilon t}\\
\Rightarrow & \|e^{J_it}\|_2\le e^{(Re(\lambda_i)+\epsilon) t}\left\|\sum_{m=0}^{k_0}\frac{N^m}{m!}t^me^{-\epsilon t}\right\|_2\!\!\le C_i(\epsilon)e^{(Re(\lambda_i)+\epsilon) t}.
\end{align*}
The claim is true by taking $C(\epsilon) := \|P^{-1}\|_2\|P\|_2\max_i C_i(\epsilon)$.
\end{proof}

\begin{proof}[Proof of Lemma \ref{lem1}]
Define the function $W(t) := \|x_1(t)\|^2$,
\begin{align*}
\frac{d}{dt}W(t)&  = 2\left[A x_1(t) + B C_c x^c_1(k) - B D_c C x_1(k\tau) + g(x_1(t),u_1(k\tau))\right]^T\!\! x_1(t)\\
& \le 2\|x_1(t)\|\left[\|A\| \|x_1(t)\| + \|B C_c\| \|x^c_1(k)\| \right.\\
& \qquad\qquad\quad\left. + \|B D_c C\| \|x_1(k\tau)\|+ \gamma\|x_1(t)\| + \gamma\|u_1(k\tau)\|\right].
%& = 2(\|A\|+\gamma) W(t) + 2\|x_1(t)\|\|\left( \|B C_c\| \|x^c_1(k)\| + \|B D_c C x_1(k\tau)\|\right)
\end{align*}
We write this inequality in terms of $\sigma(t) := \sqrt{W(t)}$ as
\begin{align*}
2\sigma(t)\frac{d}{dt}\sigma(t) & \le 2(\|A\|+\gamma) \sigma^2(t) + 2\sigma(t)\left( \|B C_c\| + \gamma \|C_c\|\right) \|x^c_1(k)\|\\
& + 2\sigma(t)\left(\|B D_c C\| + \gamma\|D_c C\| + \gamma^2\|D_c\|\right)\sigma(k\tau)
\end{align*}
\begin{align*}
\Rightarrow\frac{d}{dt}\sigma(t) & \le (\|A\|+\gamma) \sigma(t) + \left( \|B C_c\| + \gamma \|C_c\|\right) \|x^c_1(k)\|\\
& +\left(\|B D_c C\| + \gamma\|D_c C\| + \gamma^2\|D_c\|\right)\sigma(k\tau).
\end{align*}
Rename $L := \|A\|+\gamma$, $L_1 := \|B C_c\| + \gamma \|C_c\|$ and $L_2:=\|B D_c C\| + \gamma\|D_c C\| + \gamma^2 \|D_c\|$, to get
\begin{align*}
& \frac{d}{dt}e^{-Lt}\sigma(t) \le e^{-Lt} \left( L_1 \|x^c_1(k)\| + L_2 \|\sigma(k\tau)\|\right)\\
& \Rightarrow
e^{-Lt}\sigma(t) - e^{-Lk\tau}\sigma(k\tau)\le \frac{e^{-Lt} - e^{-Lk\tau}}{-L}\left( L_1 \|x^c_1(k)\| + L_2 \|\sigma(k\tau)\|\right)\\
& \Rightarrow
\sigma(t)\le e^{L(t-k\tau)}\sigma(k\tau) + \frac{1 - e^{L(t-k\tau)}}{-L}\left( L_1 \|x^c_1(k)\| + L_2\|\sigma(k\tau)\|\right)\\
& \Rightarrow
\sigma(t)\le h_1(t-k\tau,\gamma)\sigma(k\tau) + h_2(t-k\tau,\gamma) \|x^c_1(k)\|,
\end{align*}
with functions
\begin{equation}
h_1(t,\gamma) := e^{Lt} + (e^{Lt}-1)L_2/L,\quad h_2(t,\gamma) := (e^{Lt}-1)L_1/L,
\end{equation}
where $L,L_1,L_2$ depend on $\gamma$ as defined above.
\end{proof}

% The upper bound \eqref{eq:bound_pert} enables us to study the effect of the nonlinear term $g(\cdot)$ in sampled version of the dynamics. These dynamics in discrete-time can be written as
% \begin{align}
% x_1((k+1)\tau) & = (G+H D_c) x_1(k\tau) + H C_c x^c_1(k) + \hat g(k\tau),\nonumber\\
% x^c_1(k+1) & = G_c x^c_1(k)+ H_c x_1(k\tau),\label{eq:linear}
% \end{align}
% with $G = e^{A\tau}$, $H = \int_0^\tau e^{A\lambda}Bd\lambda$, and
% \begin{equation}
% \label{eq:hat_g}
% \hat g(k\tau) = \int_{0}^{\tau}e^{A(\tau-\lambda)}g(x_1(k\tau+\lambda),u_1(k\tau))d\lambda.
% \end{equation}

% \begin{lemma}
% Function $\hat g(\cdot)$ defined in \eqref{eq:hat_g} satisfies the inequality
% \begin{equation}
% \label{eq:hat_g_bound}
% \|\hat g(k\tau)\|\le \gamma \hat h_1(\tau)\|x_1(k\tau)\| + \gamma \hat h_2(\tau)\|x^c_1(k)\|,
% \end{equation}
% if $\|(x_1(t),u(k\tau))\|\le r$ for all $t\in[k\tau,(k+1)\tau]$.
% \end{lemma}
\begin{proof}[Proof of Lemma \ref{lem2}]
%\textcolor{blue}{Here we present the proof.}
Using the assumption of $\|(x_1(t),u_1(k\tau))\|\le r(\gamma)$ for all $t\in[k\tau,(k+1)\tau]$, we get
\begin{align*}
\|\hat g(k\tau)\|& \le \int_{0}^{\tau}\|e^{A(\tau-\lambda)}\|\|g(x_1(k\tau+\lambda),u_1(k\tau))\|d\lambda\\
& \le \gamma\int_{0}^{\tau}\|e^{A(\tau-\lambda)}\|(\|x_1(k\tau+\lambda)\| +\|u_1(k\tau)\|)d\lambda.
\end{align*}
Then we employ Lemma~\ref{lem1} to get
\begin{align*}
\|\hat g(k\tau)\|
\le & \gamma\int_{0}^{\tau}\|e^{A(\tau-\lambda)} \|(h_1(\lambda,\gamma)+\|D_c C\| + \gamma \|D_c\|)\|x_1(k\tau)\|d\lambda\\
& + \gamma\int_{0}^{\tau}\|e^{A(\tau-\lambda)}\|(h_2(\lambda,\gamma)+\|C_c\|) \|x_1^c(k)\|d\lambda.
\end{align*}
The claim holds with any $\hat h_1,\hat h_2$ with
\begin{align*}
\hat h_1(\tau) & \ge  \int_{0}^{\tau}\|e^{A(\tau-\lambda)}\|(h_1(\lambda,\gamma)+\|D_c C\| + \gamma\|D_c\|)d\lambda\\
\hat h_2(\tau,\gamma) & \ge \int_{0}^{\tau}\|e^{A(\tau-\lambda)}\|(h_2(\lambda,\gamma)+\|C_c\|)d\lambda,
\end{align*}
which can be selected as the following according to Lemma~\ref{lem:matrix_bound}, with $\Gamma := C(1)$ and $\alpha := a+1$,
\begin{align*}
\hat h_1&(\tau) =  \int_{0}^{\tau} \Gamma e^{\alpha(\tau-\lambda)}\|(h_1(\lambda,\gamma)+ \|D_c C\| + \gamma \|D_c\|)d\lambda\\
& = \Gamma\left(1+\frac{L_2}{L}\right)\left(\frac{e^{L\tau}- e^{\alpha\tau}}{L-\alpha}\right)+\Gamma\left(\|D_c C\| + \gamma\|D_c\| - \frac{L_2}{L}\right)\frac{e^{\alpha\tau}-1}{\alpha}, \\
\hat h_2&(\tau) =  \int_{0}^{\tau}\Gamma e^{\alpha(\tau-\lambda)}(h_2(\lambda,\gamma)+\|C_c\|)d\lambda\\
 & = \frac{\Gamma L_1}{L}\left(\frac{e^{L\tau}- e^{\alpha\tau}}{L-\alpha}\right)+\Gamma\left(\|C_c\| -\frac{L_1}{L}\right)\left(\frac{e^{\alpha\tau}-1}{\alpha}\right).
\end{align*}

\end{proof}

\section{Gluco-regulatory ODE model}\label{app:AP}

The model consists of three subsystems:
\begin{itemize}
	\item {\em Glucose Subsystem}: it tracks the masses of glucose (in mmol) in the accessible ($Q_1$) and non-accessible ($Q_2$)
    compartments, $G$ (mmol/L) represents the glucose concentration in plasma, $EGP_0$ (mmol/min) is the endogenous glucose production rate and $U_G(t)$ (mmol/min) is the glucose absorption rate from the gut.
    \item  \emph{Gut absorption}: this subsystem uses a chain of two compartments, $G_1$ and $G_2$ (mmol), to model the absorption dynamics of ingested food, given by the disturbance $D_G(t)$. $A_g$ is the CHO bio-availability. $t_{maxG}$ (min) is the time of maximum appearance rate of glucose.
    \item \emph{Interstitial glucose}: $C$ is the subcutaneous glucose concentration (mmol/L) detected by the CGM sensor and has a delayed response w.r.t. the blood concentration $G$.
    \item {\em Insulin Subsystem}: it represents absorption of subcutaneously administered insulin. It is defined by 
    a two-compartment chain, $S_1$ and $S_2$ measured in U (units of insulin),  where $u(t)$ (U/min) is the administration of insulin computed by the 
    PID controller, $u_b$ (U/min) is the basal insulin infusion rate and $I$ (U/L) indicates the insulin concentration in plasma.
    \item {\em Insulin Action Subsystem}: it models the action of insulin on glucose distribution/transport, $x_1$,
    glucose disposal, $x_2$, and endogenous glucose production, $x_3$ (unitless).
\end{itemize}
The model parameters are given in Table
\ref{table:params_AP}. %, and the system can be found in \cite{Hovorka04}. 

\begin{align*} \label{eq:ap-odes}
\dot{Q_1}(t) &= -F_{01} - x_1(t) Q_1(t) + k_{12}\cdot Q_2(t) -F_R +EGP_0(1-x_3) + U_G(t)\\
\dot{Q_2}(t) &= x_1(t) Q_1(t) -(k_{12} + x_2(t)) Q_2(t)\\
G_1(t) &= -\frac{G_1(t)}{t_{maxG}} + A_G \cdot D_G(t) \quad G_2(t) = \frac{G_1(t)-G_2(t)}{t_{maxG}} \\
U_G(t) &= \frac{G_2(t)}{t_{maxG}} \quad G(t) =\frac{Q_1(t)}{V_G} \quad C(t) = k_{\rm int}(G(t) - C(t))\\
\dot{S_1}(t) &= u(t) + u_b - \frac{S_1(t)}{t_{maxI}} \quad \dot{S_2}(t) =\frac{S_1(t)-S_2(t)}{t_{maxI}}\\
\dot{I}(t) &= \frac{S_2(t)}{t_{maxI}\cdot V_I}-k_e I \quad \dot{x_i}(t) =-k_{a_i}\cdot x_i(t)+k_{b_i}\cdot I(t), i = 1,2,3\\
\end{align*}

% may be this will go into the Appendix
\begin{table}
\centering
\begin{tabular}{|c|c||c|c||c|c|}
\hline
\bf{par} & \bf{value} & \bf{par} & \bf{value} & \bf{par} & \bf{value} \\
\hline
\hline
$w$ & 100 & $k_e$ & 0.138 & $k_{12}$ & 0.066 \\ 
$k_{a_1}$ & 0.006 & $k_{a_2}$ & 0.06 & $k_{a_3}$ & 0.03 \\ 
$k_{b_1}$ & 0.0034 & $k_{b_2}$ & 0.056 & $k_{b_3}$ & 0.024 \\ 
$t_{maxI}$ & 55 & $V_I$ & $0.12\cdot w$ & $V_G$ & $0.16\cdot w$ \\ 
$F_{01}$ & $0.0097\cdot w$ & $t_{maxG}$ & 40 & $F_R$ & 0 \\
$EGP_0$ & $0.0161\cdot w$ & $A_G$ & 0.8 & $k_{\rm int}$ & 0.025\\
\hline
\end{tabular}
\vspace{2ex}
\caption{Parameter values for the glucose-insulin regulatory model. $w$ (kg) is the body weight.}
\label{table:params_AP}
\end{table}

\section{Fuel Control System Model}\label{app:af_model}
The dynamics of the engine (plant) are given by the following set of ODEs:
\begin{align*}
\dot{p} &= c_1 \left( \dot{m}_{af} - \dot{m_c} \right)\\
\dot{\theta} &= 10(\theta_{in} - \theta)\\
\dot{\lambda} &= c_{26} \left( \frac{\dot{m_c}}{c_{25} F_c} - \lambda \right),
\end{align*}
where $p$ (bar) is the intake manifold pressure; $\theta$ (degrees) is the throttle angle; $\lambda$ is the air/fuel ratio; $\theta_{in}$ (degrees) is the throttle angle input disturbance; $\hat{\theta}$ is the throttle plate angle and is defined by:
$$\hat{\theta} = c_6 + c_7\theta + c_8\theta^2 + c_9\theta^3;$$
$\dot{m}_c$ (g/s) is the air inflow rate to cylinder and is defined by: 
$$\dot{m}_c = c_{12}(c_2 + c_3\omega p + c_4\omega p^2 + c_5\omega^2 p);$$ 
$\omega$ (rad/s) is the engine speed disturbance; $\dot{m}_{af}$ is the inlet air mass flow rate, defined by:
$$\dot{m}_{af} = 2\hat{\theta}\sqrt{p/c_{10} - (p/c_{10})^2}; \text{ and}$$
$F_c$ is the commanded fuel input defined as $F_c = (1 + u(t))\dot{m}_c/\bar{\lambda}$, where $u(t)$ is the control input and $\bar{\lambda}$ is the ideal air/fuel ratio. 

Parameter values are: $c_1=0.41328$, $c_2=-0.366$, $c_3=0.08979$, $c_4=-0.0337$, $c_5=0.0001$, $c_6=2.821$, $c_7=-0.05231$, $c_8=0.10299$, $c_9=-0.00063$, $c_{10}=1$, $c_{12}=0.9$, $c_{25}=1$, $c_{26}=4$.
% \begin{align*}
% p_e(k+1) &= p_e(k) + \tau \cdot c_1 (c_{23}\dot{m}_{af}(k\tau) - \dot{m}_{ec}(k))\\
% i(k+1) &= i(k) + \tau K_I (\lambda(k\tau)-\bar{\lambda})\\
% F_c(k+1) &= (1 + i(k) + K_P(\lambda(k\tau)-\bar{\lambda}))\dot{m}_{ec}(k)/c_{11},
% \end{align*}
% where $c_{n}$ symbols denote model parameters, $p_e$ is the estimated manifold pressure, $i$ is the integrator state, and $\dot{m}_{ec}$ is the estimated air flow into the cylinder, given by:
% $$\dot{m}_{ec}(k) = c_2 + c_3\omega(k\tau)p_e(k) + c_4\omega(k\tau)p_e(k)^2 + c_5\omega(k\tau)^2p_e(k).$$

% The plant outputs are $y =[\lambda,\dot{m}_{fa}]$, where $\dot{m_{af}}$ (g/s) is the inlet air mass flow rate, defined as: 
% $$\dot{m}_{af} = 2\hat{\theta}\sqrt{p/c_{10} - (p/10)^2}.$$

% The PI controller controls the engine through the commanded fuel input $F_c$, and the following dynamics:
% \begin{align*}
% p_e(
% \end{align*}
% maintains an estimate of the pressure, $p_e$, and 
\section{Model of the Quadruple-Tank Process}\label{app:QT_model}
The dynamics of the Quadruple-Tank Process are given by the following set of ODEs \cite{QT00}:
\begin{align*}
\frac{dh_1}{dt} &= -\frac{a_1}{A_1}\sqrt{2g h_1} + \frac{a_3}{A_1}\sqrt{2g h_3} + \frac{\gamma_1k_1}{A_1}u_1\\
\frac{dh_2}{dt} &= -\frac{a_2}{A_2}\sqrt{2g h_2} + \frac{a_4}{A_2}\sqrt{2g h_4} + \frac{\gamma_2k_2}{A_2}u_2\\
\frac{dh_3}{dt} &= -\frac{a_3}{A_3}\sqrt{2g h_3} + \frac{(1-\gamma_2)k_2}{A_3}u_2\\
\frac{dh_4}{dt} &= -\frac{a_4}{A_4}\sqrt{2g h_4} + \frac{(1-\gamma_1)k_1}{A_4}u_1,
\end{align*}
where $h_i,a_i,A_i$, $i\in\{1,2,3,4\}$ are the water level, cross-section of the outlet hole, and cross-section of tank $i$, respectively. Inputs $u_1,u_2$ indicate the voltages applied to the pumps and the corresponding flows are $k_1u_1,k_2u_2$.
The parameters $\gamma_1,\gamma_2\in(0,1)$ show the settings of the valves. The flow to tank $1$ is $\gamma_1k_1u_1$ and the flow to tank $4$ is $(1-\gamma_1)k_1u_1$ (similarly for the other two tanks). The acceleration of gravity is denoted by $g$. The water levels of tanks $1,2$ are measured by sensors as $k_ch_1,k_ch_2$. The parameter values are: $A_1=A_3=28\,\text{cm}^2$, $A_2=A_4=32\,\text{cm}^2$, $a_1=a_3=0.071\,\text{cm}^2$, $a_2=a_4=0.057\,\text{cm}^2$, $k_c = 0.5\,\text{V}/\text{cm}$, $g = 9.81\,\text{m}/\text{s}^2$. We have chosen the steady state values $h_1^0 = 12.4 \,\text{cm} ,h_2^0 = 12.7\,\text{cm},h_3^0 = 1.8\,\text{cm},h_4^0 = 1.4\,\text{cm},u_1^0=3.00\,\text{V},u_2^0=3.00\,\text{V},k_1 = 3.33\,\text{cm}^3/\text{Vs}$, and $k_2 = 3.35\,\text{cm}^3/\text{Vs}$.
\end{document}